\documentclass[10pt,conference,compsocconf,letterpaper]{IEEEtran}

\usepackage[dvips]{graphicx}
\usepackage[usenames,dvips]{color}

\usepackage{epsfig}
\usepackage[caption=false,font=footnotesize]{subfig}
\usepackage{fixltx2e}
\usepackage[cmex10]{amsmath}
\usepackage{amsfonts,amssymb,amsthm}
\usepackage{enumerate}
\usepackage{paralist}
\usepackage[nocompress]{cite}
\usepackage{xspace}
\usepackage{comment}

\usepackage[ruled,lined,linesnumbered]{algorithm2e}

\usepackage{caption}
\usepackage{graphicx}

\usepackage{multirow}

\usepackage{psfrag}



\newtheorem{theorem}{Theorem}[section]
\newtheorem{proposition}[theorem]{Proposition}
\newtheorem{lemma}[theorem]{Lemma}

\newcommand{\ie}{{\em i.e.}\xspace}
\newcommand{\eg}{{\em e.g.}\xspace}

\newcommand{\whp}{{\em w.h.p.}\xspace}

\def\problem{\textsc{MaxLSP}\xspace}
\def\problemu{\textsc{MaxLSP$^{\textrm{U}}$}\xspace}
\def\minproblem{\textsc{MinLSP}\xspace}

\newcommand{\hmm}[1]{}

\begin{document}

\title{Efficient Algorithms for Maximum Link Scheduling \\in Distributed Computing Models with SINR Constraints}

\author{
\IEEEauthorblockN{
Guanhong Pei\IEEEauthorrefmark{1},
Anil Kumar S. Vullikanti\IEEEauthorrefmark{2}
}

\IEEEauthorblockA{\IEEEauthorrefmark{1}Dept.\ of Electrical and
Computer Engineering and Virginia Bioinformatics Institute, Virginia Tech,
Blacksburg, VA 24061}

\IEEEauthorblockA{\IEEEauthorrefmark{2}Dept.\ of Computer
Science and Virginia Bioinformatics Institute, Virginia Tech,
Blacksburg, VA 24061}
}

\maketitle


\begin{abstract}
A fundamental problem in wireless networks is the
\emph{maximum link scheduling} (or maximum independent set) problem: given a set $L$ of links, compute the
largest possible subset $L'\subseteq L$ of links that can be scheduled
simultaneously without interference.
This problem is particularly challenging in the
physical interference model based on SINR constraints (referred to as
the SINR model), which has gained
a lot of interest in recent years. Constant factor approximation algorithms
have been developed for this problem, but low complexity distributed
algorithms that give the same approximation guarantee
in the SINR model are not known.
Distributed algorithms are especially challenging in this model, because
of its non-locality.

In this paper, we develop a set of fast distributed
algorithms in the SINR model, providing constant approximation
for the maximum link scheduling problem under uniform power assignment.
We find that different aspects of available technology, such as
full/half-duplex communication, and non-adaptive/adaptive power control,
have a significant impact on the performance of the algorithm; these issues have
not been explored in the context of distributed algorithms in the SINR
model before.
Our algorithms' running time is $O(g(L) \log^c m)$,
where $c=1,2,3$ for different problem instances,
and $g(L)$ is the ``link diversity''
determined by the logarithmic scale of a communication link length.
Since $g(L)$ is small and remains in a constant range in most cases,
our algorithms serve as the first set of ``sublinear'' time distributed solution.
The algorithms are randomized and crucially use physical carrier sensing in distributed communication steps.
\end{abstract}


\section{Introduction}

One of the most basic problems in wireless networks is the
\emph{Maximum Link Scheduling} problem (\problem): given a set $L$ of links, compute the
largest possible subset $L'\subseteq L$ of links that can be scheduled
simultaneously without conflicts; this is
also referred to as the one-shot scheduling \cite{Goussevskaia+:Mobihoc07}
or max independent link set problem \cite{Wan+:WASA09}. One of the main challenges
for this problem is \emph{wireless interference}, which limits the subsets of links
that can transmit simultaneously. A commonly used model is based on ``conflict graphs''
\cite{ramanathan-lloyd:schedmultihop}, broadly referred to as \emph{graph-based interference models};
examples of such models include:
the unit disk graph model, the $k$-hop interference model, and the protocol model.
\problem is challenging in these models --- the decision version of this
problem is NP-Complete, but efficient constant factor approximation
algorithms are known for many interference models \cite{ramanathan-lloyd:schedmultihop},
because of their inherent locality. However, graph based models are known to be
inaccurate and an oversimplification of wireless interference.  In recent
years, a more realistic interference model based on SINR constraints
(henceforth referred to as the SINR model) \cite{Goussevskaia+:Mobihoc07, chafekar:mobihoc07}
has gained a lot of interest: a set of links are feasible simultaneously if the signal
to interference plus noise constraints are satisfied at all receivers (see
Section \ref{sec:model} for the formal definition). This is much harder than graph based
interference models because of the inherently non-local and non-linear nature of the model;
only recently constant factor approximation algorithms have
been developed in this model \cite{Goussevskaia+:INFOCOM09,Halldorsson+:ICALP09,Wan+:WASA09}.

Since link scheduling is a common subroutine in many other problems, distributed
algorithms with low complexity are crucial. A commonly studied model for distributed
computing in wireless networks is the ``Radio Broadcast Network (RBN)'' model, in
which the transmissions on two links conflict if the links interfere (in the
corresponding graph-based model);
variations have been studied of this model, depending on capabilities such as
collision detection. Efficient distributed algorithms are known in the RBN model for
\problem, as well as other fundamental problems such as coloring
and dominating set, e.g., \cite{Moscibroda+:PODC05, Schneider+:PODC08, Afek+:sci11}.
A solution to \problem computed in the RBN model might not be feasible in the
SINR constraints (see, e.g., \cite{Goussevskaia+:Mobihoc07, chafekar:sinrcap08}).
Further, the distributed wireless communication mechanism can be quite different.
In other words, a distributed algorithm in the RBN model
cannot be implemented in general in the SINR-based model.
Therefore, we need to rethink
the design of distributed algorithms in the SINR model in a fundamentally new way.

In this paper, we focus on distributed algorithms for \problem in the SINR model, which
is defined in the following manner: at each time step of the algorithm, only those links
for which the SINR constraints are satisfied at the receivers are successful.
The goal of the algorithm is to end up with a feasible solution to \problem, whose
size is maximized.
We have to rethink distributed algorithms in the SINR model for \problem because
of the fundamental differences between the graph-based and the SINR interference models.
As mentioned earlier, even centralized algorithms for this problem are much harder in the SINR model,
than in the disk based interference model; recent work by
\cite{Goussevskaia+:INFOCOM09,Halldorsson+:ICALP09,Wan+:WASA09,Kesselheim:SODA11,Halldorsson+:SODA11} gives
constant factor approximation algorithms for various instances of \problem in the SINR model.
The centralized algorithms of
\cite{Goussevskaia+:INFOCOM09, Wan+:WASA09, Halldorsson+:ICALP09}
are based on a greedy ordering
of the links, which requires estimating the ``affectance,'' (which, informally,
is a measure of interference), at each
stage (this is discussed formally later in Section~\ref{sec:model}) --- this
is one of the challenges in distributed solutions to \problem.
We note that efficient time distributed algorithms for scheduling all the links
(i.e., the coloring version) is already known \cite{Halldorsson+:ICALP11}. Adapting them would
immediately yield a distributed $O(\log{m})$-approximation to \problem, but it
is not clear how to obtain a distributed $O(1)$-approximation. Further,
an important aspect of \problem is that the senders and receivers of
all the links should know whether they have been chosen, since this
is an important requirement in many networking applications;
this seems to be
difficult to ensure through random access based approaches.

\subsection{Contributions}
In this paper, we develop fast distributed constant factor
approximation algorithms for \problem, in which all nodes are constrained to
use uniform power levels for data transmission
(we refer to this as \problemu),
improving upon the results implied by \cite{Asgeirsson+:INFOCOM11,Dinitz:INFOCOM10}.
Our algorithms and the proofs build on ideas from
\cite{Wan+:WASA09, Halldorsson+:ICALP09,
Goussevskaia+:INFOCOM09} and \cite{scheideler+:mobihoc08}, and one of the
key technical contributions of our work is the notion of an ``$\omega_1,\omega_2$-ruling'' (discussed below) and
its distributed computation in the SINR model.
Our results raise two new issues in the context of distributed
algorithms in the SINR model --- adaptive power control (i.e., the feature of
using lower than the maximum power level, as needed), and full/half
duplex communication (i.e., whether nodes can transmit and receive
simultaneously). We find these features impact the performance of the
algorithms quite a bit.
We summarize some of the key aspects of the results and main challenges below.
\begin{asparaenum} [(1)]
\item
\emph{Performance and technology tradeoffs}. In the case of ``non-adaptive power control'' (i.e.,
if all nodes are required to use fixed uniform power levels), we design
a distributed algorithm that provably runs in $O(g(L)\log^3{m})$ time
and gives an $O(1)$ approximation to the optimum solution for half duplex communication,
and we improve the running time to $O(g(L)\log^2{m})$ for the case of full duplex communication; here $g(L)$ denotes
the ``link diversity'', which is the logarithm of the ratio of the largest
to the smallest link length (this is defined
formally in Section~\ref{sec:model}).
If nodes are capable of ``adaptive power control'' (i.e., they can use
varying power levels for scheduling, but not data transmission), we improve
the running time of the above algorithm to $O(g(L)\log^2{m})$ time for
half duplex communication,
and $O(g(L)\log{m})$ time for full duplex communication. Note that in the
adaptive power control case, the algorithm uses varying power levels during
its run, but the links which are selected finally use the fixed uniform
power level for data transmission.

\item
\emph{Key distributed subroutine}. One of our key ideas is the parallelization of the link selection,
which would have require sorting all links, processing a larger set
of links simultaneously, and efficient filtering based on spatial and interference constraints in parallel.
Moreover, it turns out that the usual notion of independence
based on spatially separated nodes is inadequate because of
the spatial separation of the sender and the receiver of a link:
it is the senders which makes the distributed decision of transmission and the participation in the independent set,
while the SINR model is receiver-oriented and it is hard for each sender of a candidate link to determine the interference caused by the chosen links at the corresponding receiver.
One of the important steps of our algorithm involves the distributed
construction of a ``ruling'' (a spatially-separated node cover, first introduced in \cite{Cole+:stoc86}) which
relates to the notion of independence and aids the solution to MIS and coloring
problems in graph topologies \cite{Awerbuch+:focs89, Panconesi+:stoc92}.
The extension of the notion of ruling and its computation in the SINR model
is one of the important technical contributions of our paper.
We believe this basic construct would be useful
in other link and topology control problems.

\item
\emph{Sensing-based message-less distributed computing}.
We make crucial use of physical carrier sensing, and
in solving \problemu we let the wireless nodes make distributed decisions purely based on the
Received Signal Strength Indication (RSSI) measurement without the need of exchanging or decoding any messages.
Given a threshold $Thres$, a node is able to detect if the total sensed power strength is $\geq Thres$.
As discussed in \cite{scheideler+:mobihoc08}, this
can be done using the RSSI measurement possible through the Clear Channel Assessment
capability in the 802.11 standard.
In this way, the protocol is much simplified such that the wireless nodes only need to control the physical layer to access the medium with a certain power or to sense the channel.
Further, our
algorithm uses constant size messages, and all the steps can be implemented
within the model without additional capabilities or assumptions (as those made in \cite{Asgeirsson+:INFOCOM11}).

\end{asparaenum}

\subsection{Key Challenges and Comparisons between Models}

\emph{Comparison between interference models}
It is known that solutions to the link scheduling problems developed under the graph-based models can
be inefficient, if not infeasible, under the SINR model.
For instance, Le et al. \cite{Le+:Mobihoc10} show that the longest-queue-first scheme may result in zero throughput under SINR constraints (unlike that in the graph based model) for the case of dynamic traffic.
As for \problem, it is easy to show that when all the transmitters have uniform transmission/interference ranges,
an optimum solution developed under a graph-based model may turn out to be a solution whose size is a fraction of $O\big((\frac{d_{max}}{d_{tx}})^2\big)$ of that of an optimum under the SINR model, where $d_{max}$ is the length of the longest link and $d_{tx}$ is the uniform transmission range.
This is because that given a set of links under the SINR model, as long as all the senders are separated by $c d_{max}$, where $c$ is some constant, all the links form an independent set.
Since we are dealing with an arbitrary topology, $d_{max}$ may be small, leading to a much conservative solution under the SINR model.

\emph{Comparison between distributed computing models}
In light of the huge amount of research on distributed algorithms in the RBN model
for many problems, including \problem (e.g., \cite{Afek+:sci11, pelegbook}),
it is natural to ask if it might be possible to ``reduce'' the SINR model problem to
the RBN model instead of developing new algorithmic techniques. Though it has not been
rigorously proven, results from recent papers suggest this might not be feasible,
or might only yield larger than constant factor gaps. 
For instance, Chafekar et al.
\cite{chafekar:sinrcap08} discuss an instance where the solution in the ``equivalent''
RBN model could be significantly smaller than that in the SINR model; see also
\cite{moscibroda:ipsn07}. Further, the RBN model does not allow for capabilities to
determine the signal strength and make decisions based on that.

\subsection{Organization}
We discuss
the network model and relevant definitions
in Section~\ref{sec:model}.
We present the high-level distributed algorithm in Algorithm~\ref{alg:one-shot} with a constant approximation ratio in Section~\ref{sec:dist-algr-highlevel}.
We introduce and analyze the distributed algorithm to compute a ruling in Section~\ref{sec:dist-algr-ruling}.
In Section~\ref{sec:dist-impl} we show the detailed implementation for each step of the high-level Algorithm~\ref{alg:one-shot};
we present a second method to implement Algorithm~\ref{alg:one-shot} in Section~\ref{sec:dist-impl-var-p}, improving the running time by a logarithmic factor.

\section{Related Work}
\label{sec:related}

There has been a lot of research on link scheduling and various related problems, because
of their fundamental nature. Two broad versions of these problems are: scheduling the
largest possible set of links from a given set (maximum independent set),
and constructing the smallest schedule for all the links (minimum length schedule).
These problems are well understood in graph based interference
models and efficient approximation algorithms are known for many versions;
see, e.g., \cite{
ramanathan-lloyd:schedmultihop}.
Distributed algorithms are also known for node and link scheduling (and many related problems)
in the radio broadcast model \cite{luby:stoc85, Afek+:sci11,
Moscibroda+:PODC05, Schneider+:PODC08, Metivier+:SIROCCO09}. These algorithms are typically
randomized and based on Luby's algorithm \cite{luby:stoc85}, and run in synchronous
polylogarithmic time. There are varying assumptions on the kind of information and
resources needed by individual nodes. For instance, \cite{luby:stoc85} require
node degrees at each step (which might vary, as nodes become inactive).
Moscibroda et al. \cite{Moscibroda+:PODC05} develop algorithms that do not require
the degree information, and run in $O(\log^2 n)$ time.
In recent work, Afek et al. \cite{Afek+:sci11} develop a distributed algorithm for
the maximal independent set problem, which only requires the an estimate of the
total number of nodes, but not degrees.

Link scheduling in the SINR model is considerably harder than in graph based models.
Several papers developed $O(g(L))$-approximations for \problemu,
e.g., \cite{Goussevskaia+:Mobihoc07, chafekar:mobihoc07}, which have been improved
to constant factor approximations by \cite{Goussevskaia+:INFOCOM09, Wan+:WASA09,
Halldorsson+:ICALP09} for uniform power assignments.
Some of these papers use ``capacity'' \cite{Goussevskaia+:INFOCOM09, Halldorsson+:ICALP09}
to refer to the maximum link scheduling; however, we prefer to avoid the term capacity
in order to avoid confusion with the total throughput in a network, which has
been traditionally referred to as the capacity (e.g., \cite{guptakumar:capacity}).
Recently, Halld\'{o}rsson and Mitra \cite{Halldorsson+:SODA11} extend the $O(1)$
approx. ratio to a wide range of oblivious power assignments for both
uni- and bi-directional links (including uniform, mean and linear power assignments). This has been improved
by Kesselheim \cite{Kesselheim:SODA11}, who developed the first $O(1)$-algorithm for
\problem with power control and an thus an $O(\log m)$-algorithm for the minimum length schedule problem.
Most of the results except those using uniform power
assignments, assume unlimited power values; otherwise the results may degrade by
a factor depending on the ratio of the maximum and minimum transmission power values.

Most of the above algorithms for scheduling in the SINR model are centralized
and it is not clear how to implement them in a distributed manner efficiently.
The closest results to ours are by \'{A}sgeirsson and Mitra \cite{Asgeirsson+:INFOCOM11} and Dinitz
\cite{Dinitz:INFOCOM10}, using game theoretic approaches; the former obtains a constant approx. ratio improving over the latter's $O(d_{max}^{2\alpha})$ approximation
for \problemu. Their running time can be much higher than ours, and they require
additional assumptions (such as acknowledgements without any cost), which
might be difficult to realize in the SINR model.

For the minimum length schedule problem (\minproblem) (where one seeks a shortest schedule to have all the links in $L$ transmit successfully) under a length-monotone sub-linear power assignment,
Fangh\"{a}nel et al. \cite{Fanghanel+:ICALP09} develop a distributed algorithm with an approximate ratio of
$O(g(L))$ times a logarithmic factor.
Recently, Kesselheim and V\"{o}cking \cite{Kesselheim+:DISC10} propose an
$O(\log^2 m)$-approximate algorithm for any fixed length-monotone and sub-linear power assignment.
The approx. ratio of that algorithm has been improved to $O(\log m)$ (matching the best performance of known centralized algorithms) by the analysis of
Halld\'{o}rsson and Mitra \cite{Halldorsson+:ICALP11}, who also prove that if all links uses the same randomized strategy, there exists a lower-bound of $\Omega(\log m)$ on the approx. ratio.
However, it is not clear how to use these results for \minproblem to get a constant
factor approximation for \problem, in which the senders and receivers of all links
know their status.

\section{Preliminaries and Definitions}
\label{sec:model}

\begin{table} [htbp]
\centering
\vspace{-0.05in}
\caption{Notation.}\label{tab:notation}
\fontsize{9.5}{12pt}\selectfont
\begin{tabular} {||c|c||c|c||} \hline
$G$ & network graph & $d(u,v)$ & dist. of $u$ and $v$ \\ \hline
$V$ & set of nodes & $L$ & set of links \\ \hline
$n$ & \#nodes & $g(L)$ & link diversity \\ \hline
$m$ & \#links & $OPT()$ & optimum instance  \\ \hline
$\alpha$ & path-loss exponent & $x(l)$ & sender of link $l$\\ \hline
$\beta$ & SINR threshold & $r(l)$ & receiver of link $l$ \\ \hline
$N$ & background noise & $d(l)$ & length of link $l$ \\ \hline
$A$ & affectance & $SP$ & sensed power \\ \hline
\end{tabular}
\vspace{-0.1in}
\end{table}
\normalsize

We let $V$ denote a set of tranceivers (henceforth, referred to as nodes)
in the Euclidean plane. We assume $L$ is a set of links with end-points in $V$,
which form the set of communication requests for the maximum link scheduling
problem at any given time, and $|L| = m$.
Links are directed, and for link $l = \big(x(l), r(l)\big)$, $x(l)$ and $r(l)$
denote the transmitter (or sender) and receiver respectively.
For a link set $L'$, let $X(L')$ denote the set of senders of links in $L'$.
Let $d(u,v)$ denote the
Euclidean distance between nodes $u,v$. For link $l$, let
$d(l)=d(x(l),r(l))$ denote its link length. For links $l, l'$,
let $d(l',l)=d(x(l'),r(l))$.
Let $d_{min}$ and $d_{max}$ denote the smallest and the largest transmission link lengths respectively.
Let $B(v,d)$ denote the ball centered at node $v$ with a radius of $d$.
Each transmitter $x(l)$ uses power $P(l)$ for transmission on link $l$; we assume commonly used path loss models \cite{Goussevskaia+:Mobihoc07,
chafekar:mobihoc07}, in which the transmission on link $l$ is possible only if:
\begin{equation}
\label{eqn:tx-fact}
	\frac{\frac{P(l)}{d^{\alpha}(l)}} {N (1 + \phi)} \geq \beta,
\end{equation}
where $\alpha>2$ is the ``path-loss exponent'', $\beta>1$ is the minimum SINR
required for successful reception, $N$ is the background noise, and
$\phi>0$ is a constant (note that $\alpha, \beta, \phi$ and $N$ are all
constants).

We partition the set of transmission links into non-overlapping link classes.
We define \emph{link diversity}
$g(L)=\lceil \log_2 \frac{d_{max}} {d_{min}}\rceil$.
Partition $L = \{L_i\}, i = 1, 2, \ldots, g(L)$, where
each $L_i = \{l \; | \, 2^{i-1} d_{min} \leq d(l) < 2^i d_{min}\}$ is
the set of links of roughly similar lengths.
Let $d_i = 2^i d_{min}$, such that $d_i$ is an upperbound of link length of $L_i$;
and $\forall i, \forall l \in L_i$, we define $\hat{d}(l) = d_i$.
In a distributed environment, nodes use their shared estimates of minimum and maximum possible link length to replace $d_{min}$ and $d_{max}$, as stated in the previous section.
$g(L)$ in most cases $\leq 6 \log 10$ and remains a constant\footnote{
The minimum link length is constrained by the device dimension, empirically at least 0.1 meter;
the maximum link length depends on the type of the network, and is usually bounded by $10^5$ meters.
For example, the Wi-Fi transmission range is below hundred meters and even long-distance Wi-Fi networks
\cite{Patra+:NSDI07}
have an experimental limit of hundred kilometers; in cellular networks the coverage is at most tens of kilometers; the transmission range in Bluetooth or 60GHz networks is smaller. This implies often $g(L) \leq \log 10^6$.};
further, as discussed earlier, each link can compute which link class it belongs to.
The \emph{reverse link} of a link $l$, denoted by $\overleftarrow{l}$, is the same link with transmission direction inverted.
For a link set $L'$, We use $\overleftarrow{L'}$ to denote the set of reverse links of $L'$.

\noindent
\textbf{Wireless Interference}.
We use physical interference model based on geometric SINR constraints (henceforth
referred to as the SINR model), where
a subset $L' \subseteq L$ of links can make successful transmission simultaneously if and
only if the following condition holds for each link $l \in L'$:
\begin{equation}
\label{eqn:sinr}
	\frac{\frac{P(l)}{d^{\alpha}(l)}} {\sum_{l' \in L' \setminus \{l\}} \frac{P(l')}{d^{\alpha}(l',l)} + N} \geq \beta.
\end{equation}
Such a set $L'$ is said to be \emph{independent} in the context.

\noindent
\textbf{The Maximum Link Scheduling Problem (\problem)}.
Given a set of communication requests (links) $L$, the goal of the \problem problem is
to find a maximum
independent subset of links that can be scheduled simultaneously in the SINR model.
\problemu is an instance of \problem
where links in a solution use a uniform power level for data transmission;
note that this does not necessarily restrict scheduling to uniform power.
In this paper, we use
$OPT(L)$ to denote an optimum solution to the \problemu, and thus
$|OPT(L)|$ is the cardinality of the largest such independent set. As discussed earlier,
computing $OPT(L)$ is NP-hard, and we focus on approximation algorithms. We say an
algorithm gives a $C$-approximation factor if it constructs an independent link set
$L'\subseteq L$ with $|L'|\geq |OPT(L)| / C$.

\noindent
\textbf{Distributed Computing Model in the SINR-based Model}.
Traditionally, distributed algorithms for wireless networks have been studied in the
radio broadcast model \cite{d2model, Moscibroda+:PODC05, Schneider+:PODC08} and its
variants. The SINR based computing model is relatively recent, and has not been
studied that extensively. Therefore, we summarize the main aspects and assumptions
underlying this model:
\begin{inparaenum} [(1)]
\item
The network is synchronized and for simplicity we assume all slots have the
same length.
\item All nodes have a common estimate of $m$, the number of links, within
a polynomial factor;
\item For each link $l\in L$, $x(l)$ and $r(l)$ have an estimate of $d(l)$, but they
do not need to know the coordinates or the direction in which the link is oriented;
\item All nodes share a common estimate of $d_{min}$ and $d_{max}$, the minimum and maximum possible link lengths;
\item
We assume nodes have physical carrier sensing capability and can detect if the
sensed signal exceeds a threshold. As discussed in \cite{scheideler+:mobihoc08}, this
can be done using the RSSI measurement possible through the Clear Channel Assessment
capability in the 802.11 standard.
Given a threshold $Thres$, we assume that a node is
able to detect if the sensed power strength is $\geq Thres$.
\end{inparaenum}

\noindent
\textbf{Sensed Power-strength and Affectance}.
For ease of analysis based on links, we define \emph{affectance}\footnote{Sometimes it shares the same definition with the term \emph{relative interference}, \eg, in \cite{Wan+:WASA09}; however, ``relative interference'' may refer to other forms, \eg, in \cite{Halldorsson+:ICALP09}.} as that in \cite{Halldorsson+:SODA11, Halldorsson+:ICALP09}: the affectance, caused by the sender of link $l'$ to the receiver of link $l$, is
$\displaystyle
A(l',l) = \frac{\beta}{1 - \frac{d^{\alpha}(l)}{P / (\beta N)}} \frac{d^{\alpha}(l)}{d^{\alpha}(l',l)}.
$
Likewise, we have affectance from a set $L'$ of links, as $A(L',l) = \sum_{l' \in L'} A(l',l)$.
It can be verified that Inequality~\eqref{eqn:sinr} is equivalent to $A(L' \setminus \{l\}, l) \leq 1$, signifying the success of data transmission on $l$.

To simplify the analysis based on nodes,
we define \emph{sensed power-strength} $SP(w, v)$, as the signal power that node $v$ receives when only $w$ is transmitting (which includes background noise); that is,
$SP(w, v) = P / d^{\alpha}(w,v) + N$.
Likewise, we have SP from a node set $W$: $SP(W,v) = \sum_{w \in W} P / d^{\alpha}(w,v) + N$.
Let $Thres(d) = P / d^{\alpha} + N$ be a function of distance $d$, such that for a node $v$, if any other node is transmitting in a range of $d$, its sensed power will exceed $Thres(d)$.

\noindent
\textbf{Node Capabilities for Distributed Scheduling}
\begin{inparaenum} [(1)]
\item \emph{Half/full Duplex Communication}:
Wireless radios are generally considered \emph{half duplex}, \ie, with a single
radio they can either transmit or receive/sense but not both at the same time.
\emph{Full duplex} radios, which are becoming reality, 
enable wireless radios to perform transmission and reception/sensing simultaneously.
\item \emph{Non-adaptive/adaptive Power}:
Although links in a solution to \problemu use a uniform power level for data transmission,
they are usually capable of using adaptive power which vary across different power levels that may be \emph{used for scheduling}. The capabilities can play a vital role in distributed computation.
\end{inparaenum}


\section{Distributed Algorithm: Overview}
\label{sec:dist-algr-highlevel}
In this section, we present the distributed algorithm
for \problemu. Because the algorithm is quite complicated, we briefly
summarize the sequential algorithm of \cite{Wan+:WASA09, Goussevskaia+:INFOCOM09,
Halldorsson+:ICALP09} below, and then give a high-level
description of the distributed algorithm and its analysis, without the
implementation details of the individual steps in the SINR model.
Section~\ref{sec:dist-algr-ruling} describes the algorithm for computing a ruling in the full and half duplex models.
The complete distributed implementation and other details are discussed
in Sections~\ref{sec:dist-impl} and~\ref{sec:dist-impl-var-p}, for
the non-adaptive and adaptive power control settings, respectively.

\subsection{The Centralized Algorithm}
\label{sec:appendix-seq}

We discuss the centralized algorithm adapted from
\cite{Wan+:WASA09, Goussevskaia+:INFOCOM09, Halldorsson+:ICALP09} for \problemu,
which forms the basis for our distributed algorithm. The algorithm processes
links in non-decreasing order of length. Let
$L$ be the initial set of links, and $S$ the set of links already chosen
(which is empty initially). Each iteration involves the following steps:
\begin{compactenum} [(1)]
\item picking the shortest link $l$ in $L \setminus S$ and removing $l$ from $L$,
\item removing from $L$ all the links in $\{l' \in L \setminus S : A(S,l') \geq c_0\}$ where $c_0<1$ is a constant, \ie, all the links in $L \setminus S$ that suffer from high interference caused by all chosen links in $S$, and
\item removing from $L$ all the links in $\{l' \in L \setminus S : d(l',l) = c_1 d(l)\}$ where $c_1$ is a constant, \ie, all the nearby links of $l$ in $L \setminus S$.
\end{compactenum}

The results of \cite{Goussevskaia+:INFOCOM09, Wan+:WASA09, Halldorsson+:ICALP09}
show that: $S$ is feasible (i.e., the SINR constraints are satisfied at
every link), and $|S|$ is within a constant factor of the optimum.
Consider a link $l$ that is added to $S$ in iteration $i$.
The proof of feasibility of set $S$ involves showing that
for this link $l$, the affectance due to the links added to $S$ after
iteration $i$ is at most $1-c_0$, so that simultaneous transmission
by all the links in $S$ does not cause high interference for $l$.
The approximation factor involves the following two ideas:
\begin{inparaenum}[(1)]
\item for any link $l\in S$, there can be at most $O(1)$ links in $OPT(L)$ which
are within distance $c_0 d(l)$, and
\item in the set of links removed in step 2 due to the affectance from $S$,
there can be at most $O(1)$ links in $OPT(L)$.
\end{inparaenum}
We note that the second and third steps are reversed in \cite{Wan+:WASA09}, while
\cite{Halldorsson+:ICALP09} does not use the third step. However, we find
it necessary for our distributed algorithm, which uses the natural approach
of considering all the links in a given length class simultaneously
(instead of sequentially).
Our analysis builds on these ideas, and property (1) holds for our case
without any changes. However, property (2) is more challenging to analyze, since
many links are added in parallel. Another complication is that the
distributed implementation has to be done from the senders' perspective,
so that the above steps become more involved.

\subsection{Additional Definitions}
\label{sec:preliminaries}

\begin{figure}[htbp]%
\centering%
\includegraphics[height=2in, width=2.22in]{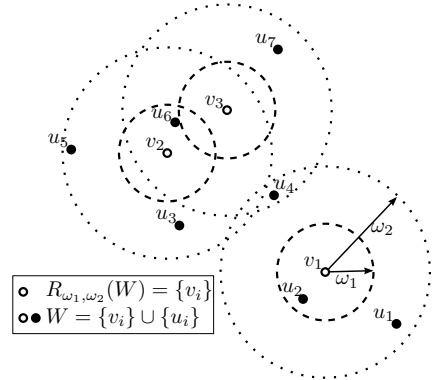}%
\caption{\small
Example of an $(\omega_1,\omega_2)$-ruling:
  $W = \{v_i\} \cup \{u_i\}$ is the set of all dots (open and dark), while
  $R_{\omega_1, \omega_2}(W) = \{v_i\}$ which is the set of all the open dots denotes a $(\omega_1, \omega_2)$-ruling of $W$.
  Note that all the nodes in $W$ are $\omega_2$-covered by $R_{\omega_1, \omega_2}(W)$,
  while all the nodes in $R_{\omega_1, \omega_2}(W)$ are $\omega_1$ away from each other.
\label{fig:example}%
}%
\end{figure}%

\normalsize
\noindent
\textbf{Cover and Ruling}.
Let $W,W'$ denote two node sets.
We say a node $u$ is \emph{$\omega$-covered}~by~$W'$, if and only if $\exists u' \in W', d(u,u') \leq \omega$;
based on that, we say $W$ is \emph{$\omega$-covered} by $W'$, or equivalently $W'$ $\omega$-\emph{covers} $W$, if and only if every node in $W$ is $\omega$-covered by $W'$.
An $(\omega_1,\omega_2)$-\emph{ruling} (where $\omega_1 < \omega_2$) of $W$, introduced in \cite{Cole+:stoc86}, is a node set denoted by $R_{\omega_1, \omega_2}(W)$, such that
\begin{compactenum} [(1)]
\item
$R_{\omega_1, \omega_2}(W) \subseteq W$;
\item
all the nodes in $R_{\omega_1, \omega_2}(W)$ are at least $\omega_1$-separated; that is, $\forall u,u' \in R_{\omega_1, \omega_2}(W)$, $d(u, u') \geq \omega_1$; and
\item
$W$ is $\omega_2$-covered by $R_{\omega_1, \omega_2}(W)$.
\end{compactenum}
Here, we have adopted a generalized definition by considering Euclidean distance rather than graph distance.
The concept of ruling has a vital role in our algorithm: it is used for choosing a set of spatially separated links and removing the nearby links of the chosen links.
Figure~\ref{fig:example}
gives an example to illustrate these notions.

\subsection{High-level Description of the Distributed Algorithm}
\label{sec:description}
\begin{algorithm}[t!]
\SetNoFillComment
\SetKwComment{Comment}{/* }{ */}
\SetKwInOut{InputP}{input}
\SetKwInOut{OutputP}{output}

\InputP{Set $L$ of links}
\OutputP{One-shot Schedule $S$}
\vspace*{0.05in}

$J \gets L$\;

\ForEach(\tcc*[h]{for each link class in $L$}){$i = 1,2,\ldots,g(L)$}{
	$J_i \gets L_i \cap J,\, J^{>}_i \gets \cup_{j > i} L_j \cap J$,
	$\omega_1 \gets \gamma_1 d_{i},
	\omega_2  \gets \gamma_2 d_{i}$\; \label{algline:step0}
    \fontsize{8.5}{10.5pt}\selectfont
	\Comment{\textbf{1st step (Lines~\ref{algline:ri-start}-\ref{algline:ri-end})}: check affectance constraints, s.t. links in $S_i$ are not subject to high interference by links of smaller lengths chosen in previous phases. Note: we use reverse $l$ to check affectance at $x(l)$}
    \normalsize
	\If{$i > 1$} {\label{algline:ri-start}
		$J^a_i \gets \big\{l \in J_i: \, A(\cup_{j < i} S_j, \overleftarrow{l}) \leq \psi (1 - (\frac{\phi}{\beta (1+\phi)})^{1/\alpha})^{\alpha}\big\}$, $\overline{J^a_i} \gets J_i \setminus J^a_i$\;
		$J^{b}_i \gets \big\{l \in J^{>}_i: \, A(\cup_{j < i} S_j, \overleftarrow{l}) \leq \psi (1 - (\frac{\phi}{\beta (1+\phi)})^{1/\alpha})^{\alpha}\big\}$, $\overline{J^b_i} \gets J^{>}_i \setminus J^b_i$\;
	} \label{algline:ri-end}
    \fontsize{8.5}{10.5pt}\selectfont
	\Comment{\textbf{2nd step (Lines~\ref{algline:ruling} \&~\ref{algline:byproduct})}: check spatial constraints, to obtain an $(\omega_1,\omega_2)$-ruling $X(J^r_i)$ of $X(J^a_i)$, s.t. the selected links ($J^r_i$) are spatially separated and nearby links ($J^z_i$) of similar or larger lengths are excluded}
    \normalsize
	construct link set $J^r_i$, s.t. $X(J^r_i)$ is an $(\omega_1,\omega_2)$-ruling of $X(J^a_i)$\; \label{algline:ruling}
    construct link set $J^z_i$, s.t.
    (1) $J^z_i \cap J^a_i = J^a_i \setminus J^r_i$ and (2)~$\{l \in J^b_i : x(l) \text{ is } \omega_1\text{-covered by } X(J^r_i)\} \subseteq J^z_i \cap J^b_i \subseteq \{l \in J^b_i : x(l) \text{ is } \omega_2\text{-covered by } X(J^r_i)\}$\; \label{algline:byproduct}
    \fontsize{8.5}{10.5pt}\selectfont
	\Comment{\textbf{3rd step (Lines~\ref{algline:step3-1} \&~\ref{algline:step3-2})}: select and discard links}
    \normalsize
	$S_i \gets J^r_i$, $J \gets J \setminus J^r_i$\Comment*[l]{\fontsize{8.5}{10.5pt}\selectfont select links\normalsize} \label{algline:step3-1}
	$J \gets J \setminus (\overline{J^a_i} \cup \overline{J^b_i} \cup J^z_i)$\Comment*[l]{\fontsize{8.5}{10.5pt}\selectfont discard links\normalsize} \label{algline:step3-2}
}

$S \gets \cup_i S_i$, Return $S$\;

\caption{Distributed Maximum Link Scheduling\label{alg:one-shot}}
\end{algorithm}

We have provided a detailed discussion of the centralized algorithms and basic ideas in Section~\ref{sec:appendix-seq};
now we discuss the distributed algorithm at a high level
(Algorithm~\ref{alg:one-shot}), and prove the main properties.
We use the following constants in the algorithm
$\gamma_1 = \left(\frac{36 \beta}{1-\psi}\frac{\alpha-1}{(\alpha-2)}\frac{1+\phi}{\phi}\right)^{1/\alpha} + 2$ and
$\gamma_2$ as an arbitrary constant $>\gamma_1$, where $\alpha,\beta,\phi$ are constants in described in Inequality~\eqref{eqn:tx-fact} and $\psi$ is a constant that can take any value from $(0,1)$.
The algorithm sweeps through the link classes in $g(L)$ phases.
In the $i$th phase, where $i \in [1,g(L)]$, it selects a subset of links from $L_i$ to include in $S$, and removes a subset of links from $\cup_{j>i} L_{j}$ to speed up later phases;
the comments in Algorithm~\ref{alg:one-shot} explain each step.
\begin{compactenum}[(1)]
\item Step 1 in line \ref{algline:ri-start}-\ref{algline:ri-end} of the algorithm eliminate links which do not satisfy affectance
constraints --- its implementation is formally described in
Sections~\ref{sec:dist-impl} and~\ref{sec:dist-impl-var-p}, and
crucially relies on the measurement of the received power at each sender.
\item Step 2 in line \ref{algline:ruling} of the algorithm constructs the distributed ruling,
which is discussed in Section~\ref{sec:dist-algr-ruling} for the
non-adaptive power case, and extended to the adaptive case in
Section~\ref{sec:dist-impl-var-p}.
\end{compactenum}

\begin{lemma}[Correctness]
\label{lemma:correctness}
The algorithm results in an independent set $S$.
\end{lemma}

Lemma~\ref{lemma:correctness} shows the correctness and is proved in Appendix~\ref{append:correctness}.
Algorithm~\ref{alg:one-shot} results in a constant approximation ratio shown in Theorem~\ref{theorem:constantS}; its proof is in Appendix~\ref{append:constantS}, where this theorem is backed by Lemma~\ref{lemma:constant-step1} and~\ref{lemma:constant-step2} in Appendix~\ref{append:constantS}.
The two lemmas, independent of Algorithm~\ref{alg:one-shot}, stands on its own to provide insights of how an optimum solution is shaped under the SINR model; they can be proved by using a combination of techniques found in \cite{Wan+:WASA09,Halldorsson+:ICALP09}. We provide their proofs in Appendix~\ref{append:constantS} to make the paper complete and help the reading.

\begin{theorem}[Approximation Ratio]
\label{theorem:constantS}
$\forall \gamma_1, \gamma_2, \psi > 0$, $|OPT(L)| \leq C_3(\gamma_2, \psi) |S|$ if $\gamma_2 > \gamma_1 > 1$, where $C_3(\gamma_2, \psi) = C_1(\gamma_2) + C_2(\psi (1 - (\frac{\phi}{\beta (1+\phi)})^{1/\alpha})^{\alpha})$,
where $C_1(x) = \frac{(2 x+1)^{\alpha}} {\beta}$ and
$C_2(x) = (\frac{2(\beta b)^{1/\alpha}}{(\beta b)^{1/\alpha} - 1})^{\alpha} / x+1$.
are functions with constant output values for constant input arguments.
\end{theorem}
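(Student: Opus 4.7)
The plan is to establish a charging argument: assign each link $o \in OPT(L)$ to some link $l \in S$, and show that every $l \in S$ receives at most $C_3(\gamma_2,\psi)$ charges, from which $|OPT(L)| \leq C_3|S|$ follows immediately. First I would partition $OPT(L)$ by link class and trace the fate of each $o \in OPT(L) \cap L_i$ through phase $i$ of Algorithm~\ref{alg:one-shot}. Exactly one of three things happens to $o$: either (i) $o$ itself is selected, i.e.\ $o \in S$, in which case charge it to itself; or (ii) $o$ is removed at Line~\ref{algline:ri-start}--\ref{algline:ri-end} for failing the affectance threshold against $\cup_{j<i} S_j$ (this is Lemma~\ref{lemma:constant-step2}'s regime); or (iii) $o$ survives the affectance filter and then is discarded via $J^z_i$ because, by the ruling property, $x(o)$ lies within $\omega_2 = \gamma_2 d_i$ of the sender of some $l \in J^r_i = S_i$ (this is Lemma~\ref{lemma:constant-step1}'s regime).

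For case (iii), I would bound the number of OPT links whose senders fall inside a disk of radius $\omega_2 = \gamma_2 d_i$ around $x(l)$ for a fixed $l \in S_i$. The sharp bound comes not from a Euclidean packing argument but from the feasibility of OPT itself: each such charged link $o$ satisfies $\hat{d}(o) \geq d_i$ (or a constant fraction of it, after accounting for class boundaries) and $d(x(o), x(l)) \leq \gamma_2 d_i$, so by the triangle inequality $d(l',l) \leq (2\gamma_2 + 1)\hat{d}(o)$ for any pair $o, o'$ of such OPT links. Substituting into the affectance expression yields $A(o',o) \geq \beta/(2\gamma_2+1)^\alpha$ for every other OPT link $o'$ also charged to $l$. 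Because OPT is independent, the total affectance received at $r(o)$ from all these siblings is at most $1$, which forces the number of OPT links charged to any single $l$ under rule (iii) to be at most $C_1(\gamma_2) = (2\gamma_2+1)^\alpha/\beta$; this is precisely the content of Lemma~\ref{lemma:constant-step1}.

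For case (ii), I would invert the direction of the affectance accounting. Each $o$ removed in phase $i$ satisfies $A(\cup_{j<i}S_j,\overleftarrow{o}) > \psi(1-(\phi/(\beta(1+\phi)))^{1/\alpha})^\alpha$; charge $o$ to the link $l \in \cup_{j<i}S_j$ that contributes the largest share of this affectance. For a fixed $l$, I would then bound the number of OPT links that can be charged by summing $A(l, \overleftarrow{o})$ over the charged set, using that this sum is controlled by the reverse affectance $A(\overleftarrow{OPT}, \overleftarrow{l})$, which in turn is $O(1)$ by the feasibility of OPT (together with the constant gap in the SINR inequality baked into the factor $1-(\phi/(\beta(1+\phi)))^{1/\alpha}$). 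Combined with the threshold lower bound on each individual charge, this yields at most $C_2(\psi(1-(\phi/(\beta(1+\phi)))^{1/\alpha})^\alpha)$ charges per $l$, giving Lemma~\ref{lemma:constant-step2}.

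The main obstacle I expect is case (ii): the charging couples a single short $l \in S$ with OPT links of potentially many different length classes, and converting ``$l$ is a major affecter of $o$'' into a count of distinct $o$'s requires the pseudo-triangle / summation bounds on affectance developed in \cite{Wan+:WASA09, Halldorsson+:ICALP09} in combination with the independence of OPT. Once Lemmas~\ref{lemma:constant-step1} and~\ref{lemma:constant-step2} are in place, the conclusion is a one-line summation over the three cases across all phases, yielding $|OPT(L)| \leq \bigl(C_1(\gamma_2) + C_2(\psi(1-(\phi/(\beta(1+\phi)))^{1/\alpha})^\alpha)\bigr)|S| = C_3(\gamma_2,\psi)|S|$, which completes the proof.
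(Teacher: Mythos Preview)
Your decomposition into cases (i)--(iii) and your treatment of case (iii) both match the paper: case (iii) is exactly Lemma~\ref{lemma:constant-step1}, proved just as you describe via the feasibility of $OPT$ inside a ball of radius $\gamma_2 d_i$.

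The gap is in case (ii). Your max-contributor charging does not work as stated, for two reasons. First, the per-charge lower bound fails: if $l$ is the largest single contributor to $A(\cup_{j<i}S_j,\overleftarrow{o})>\psi'$, you only get $A(l,\overleftarrow{o})>\psi'/|\cup_{j<i}S_j|$, which depends on $|S|$ and is not a constant. Second, the upper bound fails: the quantity $\sum_{o} A(l,\overleftarrow{o})$ is the \emph{outgoing} affectance from $x(l)$ to the points $\{x(o)\}$, weighted by $d^\alpha(o)$; it is not the same as, nor bounded by, $A(\overleftarrow{OPT},\overleftarrow{l})$, which is incoming affectance at $x(l)$ from the points $\{r(o)\}$, weighted by $d^\alpha(l)$. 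And even if they were comparable, feasibility of $OPT$ says nothing about affectance at $\overleftarrow{l}$ when $l\in S\setminus OPT$.

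The paper's Lemma~\ref{lemma:constant-step2} avoids per-link charging entirely. It writes $|OPT(L'')|=b|L'|+g$, then for each $l_i\in L'$ packs into $Bin_i$ the $b$ links of $OPT(L'')$ whose senders are nearest to $x(l_i)$; this leaves at least one ``far'' link $l\in OPT(L'')$ outside all bins. For every $k\in Bin_i$ one has $d(x(k),x(l_i))<d(x(l),x(l_i))$, hence $d(k,l)\le c\cdot d(x(l_i),x(l))$ for a constant $c$. Summing the affectance on $l$ from all bins gives
\[
1\;\ge\;A\Bigl(\bigcup_i Bin_i,\,l\Bigr)\;\ge\;\frac{b}{c^{\alpha}}\,A(L',\overleftarrow{l})\;>\;\frac{b}{c^{\alpha}}\,\psi',
\]
where the first inequality is the single place feasibility of $OPT$ is used (applied to $l\in OPT(L'')$), and the last uses the removal hypothesis $A(L',\overleftarrow{l})>\psi'$ for the far link. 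This forces $b\le c^{\alpha}/\psi'$ and hence $|OPT(L'')|\le (b+1)|L'|=C_2(\psi')|L'|$. The trick is that the spatial bins let you transfer affectance-on-$l$ from the $b$ OPT links near $l_i$ to the single link $l_i$ itself, converting the global threshold condition into a bound on $b$; no per-link charging is needed.
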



\section{Distributed Algorithm: $(\omega_1,\omega_2)$-Ruling}
\label{sec:dist-algr-ruling}

In this section, we present Algorithm~\ref{alg:ex-ruling}, the distributed algorithm to compute an $(\omega_1,\omega_2)$-ruling, for full duplex communication under the physical interference model; in the end of the section, we extend it to the half duplex setting (where a node can perform transmission and reception/sensing at the same time) with added running time. While this algorithm can be interesting by itself, it serves as a significant building block for our distributed implementation.
For the algorithm to function properly, we require the input parameter $\omega_2 \geq (36 \frac{\alpha-1}{\alpha-2})^{\alpha-2} \omega_1$.
Recall that $B(v,d)$ denotes the ball centered at $v$ with a radius of $d$.
Let $n$ be the total number of nodes.
The last input parameter $b_{max}$ denote the estimate of the maximum number of nodes in the ball $B(v, \omega_1)$ of any node $v \in W_1$; in the worst case, $b_{max} \leq n$.

\begin{algorithm}[ht!]
\SetNoFillComment
\SetKwComment{Comment}{/* }{ */}
\SetKwInOut{InputP}{input}
\SetKwInOut{OutputP}{output}

\InputP{$\omega_1,\omega_2,W_1,W_2,b_{max}$}
\OutputP{$(\hat{R}, \hat{Z})$: an $(\omega_1,\omega_2)$-ruling of $W_1$}
\vspace*{0.05in}


\fontsize{8.5}{10.5pt}\selectfont
\Comment{Each $v \in W_1 \cup W_2$ does the following}
\normalsize

\For{$i_{out} = 0$ to $\log b_{max} + 1$} {\label{algline:loop1}
    \For{$i_{in} = 1$ to $C_4 \log n$} {\label{algline:loop2}
    	\If{$v$ is active} {
            \fontsize{8.5}{10.5pt}\selectfont
    		\Comment{Coordination Step (Lines~\ref{algline:coor-start}-\ref{algline:coor-end}): 1 slot}\normalsize
    		$U(v) \gets 0$\; \label{algline:coor-start}
	    	\lIf{$v \in W_1$}{$U(v)$ flips to 1 w/ prob. $\frac{2^{i_{out}-2}}{b_{max}}$}\; \label{algline:inU}
	    	\vspace*{0.05in}
            \If{$U(v) = 1$} { \label{algline:loop3}
            	$v$ transmits and senses, \,$I(v) \gets$ the power $v$ receives in this slot\;
            	\lIf{$I(v) > Thres(\omega_1)$} {
            		$U(v) \gets 0$\;
            	}
            }  \label{algline:coor-end}
            \fontsize{8.5}{10.5pt}\selectfont
	    	\Comment{Decision Step (Lines~\ref{algline:decision-start}-\ref{algline:decision-end}): 1 slot}\normalsize
	    	\lIf{$U(v) = 1$} { \label{algline:decision-start}
	    		$v$ transmits, \,$v$ joins $\hat{R}$\Comment*[l]{\fontsize{8.5}{10.5pt}\selectfont inactive\normalsize}
	    	}
	        \Else{
	    		$v$ senses, $I(v) \gets$ the power $v$ receives in this slot\;
	    		\lIf{$I(v) > Thres(\omega_1)$} {
	    			$v$ joins $\hat{Z}$\Comment*[l]{\fontsize{8.5}{10.5pt}\selectfont inactive\normalsize}
	    		}
	    	} \label{algline:decision-end}
	    }
    }
}
\Return $(\hat{R}, \hat{Z})$\;
\caption{$\operatorname{ConstructR}(\omega_1,\omega_2,W_1,W_2,b_{max})$: Distributed algorithm for computing an $(\omega_1,\omega_2)$-ruling with full duplex radios.\label{alg:ex-ruling}}
\end{algorithm}

In this algorithm, we call an iteration of the outer loop (Line~\ref{algline:loop1}) a \emph{phase}; we call an iteration of the inner loop (Line~\ref{algline:loop2}) a \emph{round}, consisting of the \textit{coordination step} (Lines~\ref{algline:coor-start} through~\ref{algline:coor-end}) and the \textit{decision step} (Lines~\ref{algline:decision-start} through~\ref{algline:decision-end}).
A node $v$ is said to be \emph{active} if $v$ has not joined either $\hat{R}$ or $\hat{Z}$; otherwise, $v$ becomes \emph{inactive}.

In each round, the coordination step provides a probabilistic mechanism for active nodes to compete to get in the ruling (at Line~\ref{algline:inU}).
Lines~\ref{algline:loop3} through~\ref{algline:coor-end} constitute a module to resolve the issue of sensing and transmitting at the same time, such that two nearby nodes do not both enter the ruling (\ie, Lemma~\ref{lemma:spaced-full}).
Next, during the decision step, a subset of active nodes decide to join $\hat{R}$ or $\hat{Z}$.

In each phase, there are $C_4 \log n$ rounds, such that we can ensure a fraction of the node population have either joined $\hat{R}$ or $\hat{Z}$, and we expect the maximum number of active nodes in the nearby region of any active node to decrease by a half (proved in Lemma~\ref{lemma:degree} in Appendix~\ref{append:runtime}).
After each phase, the probability for each active node to access the channel and compete doubles (at Line~\ref{algline:inU}). After the total of $\log b_{max} + 2$ phases, we have Lemmas~\ref{lemma:runtime},~\ref{lemma:covered1},~\ref{lemma:covered2} that lead to Theorem~\ref{theorem:alg-ex-ruling}.

\label{sec:correctness-dist-algr}

\begin{theorem}[Correctness]
\label{theorem:alg-ex-ruling}
Algorithm~\ref{alg:ex-ruling} terminates in $O(\log n \log b_{max})$ time.
By the end of the algorithm:
\begin{inparaenum} [(1)]
\item $\hat{R}$ forms an $(\omega_1, \omega_2)$-ruling of $W_1$, and
\item $\hat{Z} \cap W_1 = W_1 \setminus \hat{R}$ and $\{v \in W_2 : v \text{ is } \omega_1\text{-covered by } \hat{R}\} \subseteq \hat{Z} \cap W_2 \subseteq \{v \in W_2 : v \text{ is } \omega_2\text{-covered by } \hat{R}\}$,~\whp
\end{inparaenum}
\end{theorem}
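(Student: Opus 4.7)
The running time bound is essentially a book-keeping exercise: the outer loop executes $\log b_{max}+2$ phases, the inner loop executes $C_4\log n$ rounds, and each round costs exactly two slots (the coordination step and the decision step), giving the claimed $O(\log n \log b_{max})$ bound. The real content lies in verifying the three structural properties: (i)~every pair of nodes in $\hat R$ is at distance at least $\omega_1$; (ii)~every $v\in W_1$ is $\omega_2$-covered by $\hat R$ (equivalently, $\hat Z \cap W_1 = W_1\setminus\hat R$); and (iii)~the two-sided containment for $\hat Z\cap W_2$.

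For (i), I would argue directly from the coordination step. By construction, only nodes in $W_1$ ever set $U(v)=1$, so $\hat R\subseteq W_1$. Suppose two nodes $u,v\in W_1$ with $d(u,v)<\omega_1$ both had $U=1$ entering their decision step. In the coordination step each of them transmits and, using full-duplex sensing, measures the incoming power; the contribution from the other alone exceeds $Thres(\omega_1)$, so both reset $U$ to $0$ before the decision step. Hence at most one of them can join $\hat R$, which is precisely the content of Lemma~\ref{lemma:spaced-full} that I would quote here.

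For (ii), the heart of the argument is showing that after $\log b_{max}+2$ phases no $W_1$ node is still active, w.h.p. Following the outline provided by Lemmas~\ref{lemma:degree}--\ref{lemma:covered1}, I would track the potential $\Phi_{i_{out}}(v) = |\{u\in W_1\text{ active}: d(u,v)\le\omega_1\}|$ for each active $v\in W_1$. At the start $\Phi_0(v)\le b_{max}$, and the probability $2^{i_{out}-2}/b_{max}$ is calibrated so that in phase $i_{out}$ each active node has constant probability to either itself win a coordination step (joining $\hat R$) or sense the winner among its $\omega_1$-neighbors (joining $\hat Z$). Using standard Chernoff plus union bound over the $C_4\log n$ rounds, the potential halves per phase w.h.p., so after $\log b_{max}+2$ phases $\Phi(v)\le 1$ and every remaining active $W_1$ node is effectively isolated; one further constant-probability round then absorbs it into $\hat R$ or $\hat Z$. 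I expect the probabilistic bookkeeping---choosing $C_4$ to survive the union bound across $O(\log n\log b_{max})$ rounds and $n$ nodes simultaneously---to be the main obstacle, but it is standard once the density-halving invariant is in place.

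For (iii), observe that a node $v$ joins $\hat Z$ only through the decision step, which requires sensing power above $Thres(\omega_1)$, and during the decision step only nodes that join $\hat R$ transmit; the forward containment $\{v\in W_2 : v \text{ is }\omega_1\text{-covered by }\hat R\}\subseteq \hat Z\cap W_2$ then follows from the fact that a single $\hat R$-neighbor within $\omega_1$ already pushes the sensed power above $Thres(\omega_1)$. The reverse containment $\hat Z\cap W_2\subseteq \{v\in W_2:v\text{ is }\omega_2\text{-covered by }\hat R\}$ is the converse: if every member of $\hat R$ is at distance $>\omega_2$ from $v$, then by the $\omega_1$-separation of $\hat R$ (property~(i)) one may pack them into disjoint annuli and bound $SP(\hat R,v)-N$ by a convergent geometric series whose sum, thanks to the hypothesis $\omega_2\ge(36\tfrac{\alpha-1}{\alpha-2})^{\alpha-2}\omega_1$, is strictly less than $Thres(\omega_1)-N$; hence $v$ could never have crossed the threshold and joined $\hat Z$. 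Combining (i)--(iii) yields the theorem.
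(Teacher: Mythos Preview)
Your proposal is correct and mirrors the paper's decomposition exactly: separation of $\hat R$ via the full-duplex coordination step (Lemma~\ref{lemma:spaced-full}), completion of $W_1$ via the density-halving induction (Lemmas~\ref{lemma:lucky}--\ref{lemma:degree} feeding Lemma~\ref{lemma:runtime}), and the two-sided containment via the threshold argument and the ring-packing bound (Lemmas~\ref{lemma:covered1} and~\ref{lemma:covered2} together with Proposition~\ref{prop:sensed-interference}).

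One small logical slip to fix: in your item~(ii) you write that ``every $v\in W_1$ is $\omega_2$-covered by $\hat R$'' is \emph{equivalent} to ``$\hat Z\cap W_1=W_1\setminus\hat R$''. These are not equivalent; the latter only says every $W_1$-node becomes inactive, which is what your density-halving argument actually proves. The $\omega_2$-coverage of $W_1\setminus\hat R$ is a separate claim that follows from the same annulus-packing bound you give in~(iii) applied to nodes of $\hat Z\cap W_1$, not just $\hat Z\cap W_2$ (this is exactly how the paper uses Lemma~\ref{lemma:covered2}). Also, the paper's halving argument uses Markov's inequality on the expected far-field power plus a direct per-round success probability, not Chernoff; your Chernoff route would work too but is slightly heavier than needed.
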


Theorem~\ref{theorem:alg-ex-ruling} follows directly from the lemmas below.
Lemmas~\ref{lemma:runtime},~\ref{lemma:spaced-full} and~\ref{lemma:covered2} prove that $\hat{R}$ is an $(\omega_1, \omega_2)$-ruling of $W_1$, \whp
Lemmas~\ref{lemma:runtime},~\ref{lemma:covered1},~\ref{lemma:covered2} together shows that $\hat{Z}$ complements $\hat{R}$ in $W_1$ and partially in $W_2$ with desired properties, \whp
To help the reading flow and due to the page limit, we defer much of the technical content --- the proof of Lemma~\ref{lemma:runtime} (which involves Lemmas~\ref{lemma:lucky},~\ref{lemma:success} and~\ref{lemma:degree}) and the proof of Lemma~\ref{lemma:covered2}
--- to Appendix~\ref{append:ruling}.

\begin{lemma}[Completion]
\label{lemma:runtime}
By the end of the algorithm, all nodes in $W_1$ have joined either $\hat{R}$ or $\hat{Z}$, \ie, all nodes in $W_1$ become inactive, \whp
\end{lemma}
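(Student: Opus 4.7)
The plan is to view the outer loop as a sequence of phases that progressively thin the local density of active $W_1$-nodes, apply the auxiliary lemmas stated in Appendix~\ref{append:ruling} in an inductive pattern, and close with a union bound over nodes and phases.

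First I would fix notation: for each phase $t$ (indexed by $i_{out}$) and each node $v \in W_1$ still active at the start of that phase, let $d_v^{(t)}$ denote the number of active $W_1$-nodes lying in $B(v,\omega_1)$. By hypothesis $d_v^{(0)} \leq b_{max}$. Because the self-nomination probability $p_t = 2^{t-2}/b_{max}$ at Line~\ref{algline:inU} doubles each phase, the expected number of local contenders $p_t \cdot d_v^{(t)}$ is kept of order one under the invariant $d_v^{(t)} = O(b_{max}/2^t)$, which is precisely the Luby-style regime in which a constant-probability inactivation event can be extracted in each round.

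Next I would invoke Lemma~\ref{lemma:degree} (the halving lemma) phase by phase, whose proof draws on Lemma~\ref{lemma:lucky} (a nominated node finds itself in a neighborhood with few other nominees with constant probability) and Lemma~\ref{lemma:success} (given that favorable configuration, the nominee either joins $\hat R$ in the decision step, or a nearby sensing node joins $\hat Z$, with constant probability). Together these imply that across the $C_4\log n$ rounds of phase $t$, with probability $\geq 1 - 1/\mathrm{poly}(n)$ every still-active $v \in W_1$ satisfies $d_v^{(t+1)} \leq d_v^{(t)}/2$. A straightforward induction maintains the invariant through phase $\log b_{max}$, after which every ball $B(v,\omega_1)$ contains at most a single active $W_1$-node.

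In the final phase, $p_{\log b_{max}+1} = 1/2$, so any isolated active node is the unique contender in its ball, and by Lemma~\ref{lemma:success} it is inactivated in a single round with at least a constant probability; over $C_4 \log n$ independent rounds the probability it remains active is $n^{-\Omega(C_4)}$. A union bound over the $n$ nodes and the $O(\log b_{max})$ phases yields the claimed high-probability bound, and the overall running time is the $O(\log n \log b_{max})$ slots prescribed by the two nested loops. The main obstacle lies entirely inside the halving lemma, which must dispose of the SINR-specific difficulty that even a uniquely-nominated node may see its coordination-step sense value cross $Thres(\omega_1)$ due to the aggregated power of many far-away simultaneous nominees; that bookkeeping is handled in the appendix, and here one simply chains the statements together.
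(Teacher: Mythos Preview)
Your plan is correct and mirrors the paper's own argument: both chain Lemmas~\ref{lemma:lucky}, \ref{lemma:success}, and \ref{lemma:degree} to maintain the halving invariant $d_v^{(t)}\le b_{max}/2^{t}$ phase by phase, then finish with a union bound. The only cosmetic difference is that the paper simply runs the induction one extra phase (to $i_{out}=\log b_{max}+1$) so that the density bound drops below $1$ directly, avoiding your separate ``isolated survivor'' case; also note that Lemma~\ref{lemma:lucky} does not give a constant probability but one scaling like $2^{i_{out}}/b_{max}$, and it is the summation over the $\Theta(b_{max}/2^{i_{out}})$ nodes in the ball (inside Lemma~\ref{lemma:success}) that produces the constant $C_6$.
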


Lemma~\ref{lemma:runtime} implies that $\hat{Z} \cap W_1 = W_1 \setminus \hat{R}$.
We say a node $v \in \hat{R}$ is ``\emph{good},'' if and only if $d(v,v') \geq \omega_1, \forall v' \in \hat{R}$ and $v' \neq v$. In Algorithm~\ref{alg:ex-ruling}, When a node enters $\hat{R}$, it makes sure that there are no other ones entering $\hat{R}$ within a range of $\omega_1$, and it deactivate all the active nodes in the same range. Therefore, we have the following Lemmas~\ref{lemma:spaced-full} and~\ref{lemma:covered1}.
\begin{lemma}[Quality of $\hat{R}$]
\label{lemma:spaced-full}
All nodes in $\hat{R}$ are good, with probability of 1.
\end{lemma}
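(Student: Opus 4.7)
The plan is a proof by contradiction leveraging the sensing/transmission discipline in the coordination and decision steps of Algorithm~\ref{alg:ex-ruling}. Suppose there exist $v_1,v_2\in\hat{R}$ with $d(v_1,v_2)<\omega_1$, and let $r_1,r_2$ be the (combined outer/inner) iterations in which they respectively join $\hat{R}$, with WLOG $r_1\le r_2$. The core observation I will exploit is that whenever a node transmits at the uniform power $P$ at distance $d<\omega_1$ from a sensing node $v$, the contribution of that single transmission already gives $SP\ge P/d^\alpha + N > P/\omega_1^\alpha + N = Thres(\omega_1)$, and any additional concurrent transmitters can only increase $I(v)$. Thus the threshold test at Lines~\ref{algline:coor-start}--\ref{algline:coor-end} and the analogous test in the decision step triggers deterministically whenever a transmitter is strictly within $\omega_1$.

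For the simultaneous-round case $r_1=r_2$, both $v_1$ and $v_2$ must exit the coordination step with $U(v_j)=1$, and hence both reach Line~\ref{algline:loop3}, where each transmits and, because the radios are full duplex, simultaneously senses. By the observation above, $I(v_1)>Thres(\omega_1)$ and symmetrically $I(v_2)>Thres(\omega_1)$, so both are forced back to $U\gets 0$. This contradicts either node reaching the $U=1$ branch of the decision step in round $r_1$. For the strictly-later case $r_1<r_2$, node $v_1$ transmits in the decision step of round $r_1$ via Line~\ref{algline:decision-start}; since $v_2$ joins $\hat{R}$ only in round $r_2>r_1$, it must still be active at the start of round $r_1$, and moreover its coordination step in that round must end with $U(v_2)=0$ (otherwise $v_2$ would itself join $\hat{R}$ in round $r_1$). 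Consequently, $v_2$ enters the sensing branch of the decision step of round $r_1$, where $v_1$'s transmission at distance $<\omega_1$ again yields $I(v_2)>Thres(\omega_1)$, so $v_2$ joins $\hat{Z}$ and becomes inactive, contradicting $v_2\in\hat{R}$.

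I do not anticipate a significant obstacle here; the lemma is essentially the correctness statement of the coordination/decision template, and the with-probability-$1$ guarantee holds because the threshold test is deterministic given the positions and the set of current transmitters. The only subtlety worth flagging in the writeup is to make explicit (i) that full-duplex reception is what closes off the simultaneous-round case, and (ii) that the strict inequality in the threshold test is matched by the strict inequality $d(v_1,v_2)<\omega_1$ obtained by negating ``good,'' so no boundary case arises.
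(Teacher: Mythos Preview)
Your argument is correct and follows the same mechanism the paper relies on: the full-duplex transmit-and-sense in the coordination step deterministically knocks out simultaneous close candidates, and the sensing branch of the decision step deterministically deactivates any close node in an earlier round. The paper states this only as a one-sentence observation preceding the lemma (``When a node enters $\hat{R}$, it makes sure that there are no other ones entering $\hat{R}$ within a range of $\omega_1$, and it deactivates all the active nodes in the same range''), so your two-case expansion is a faithful and more explicit version of exactly that reasoning.
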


\begin{lemma}[Quality of $\hat{Z}$: Part 1]
\label{lemma:covered1}
$\hat{Z}$ contains all the nodes in $W_1 \cup W_2 \setminus \hat{R}$ that are $\omega_1$-covered by $\hat{R}$, with probability of 1.
\end{lemma}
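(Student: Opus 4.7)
The plan is to prove the claim deterministically (which yields the stated ``with probability $1$''), by fixing an arbitrary $v \in (W_1 \cup W_2) \setminus \hat{R}$ that is $\omega_1$-covered by $\hat{R}$, choosing a witness $u \in \hat{R}$ with $d(u,v) \leq \omega_1$, and tracking $v$ during the unique round $t$ in which $u$ enters $\hat{R}$. Uniqueness holds because, once a node becomes inactive, it never reactivates, so $u$ joins $\hat{R}$ in the decision step of exactly one round.

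First I would dispose of the easy case: if $v$ is already inactive at the start of round $t$, then $v$ joined $\hat{R} \cup \hat{Z}$ in some earlier round, and $v \notin \hat{R}$ forces $v \in \hat{Z}$. So I may assume $v$ is still active at the start of round $t$, which means $v$ executes both the coordination and decision sub-steps of round $t$. The next step is to show $U(v) = 0$ at the end of the coordination step. If $v \in W_2$ this is immediate, since the coin-flip on Line~\ref{algline:inU} is restricted to $W_1$. If $v \in W_1$ and hypothetically $U(v)$ flipped to $1$, then both $u$ and $v$ would transmit simultaneously in the coordination step (because $U(u)=1$ survives it, by the definition of $t$); since $d(u,v) \leq \omega_1$, the signal from $v$ alone would push $u$'s full-duplex measurement $I(u)$ above $Thres(\omega_1)$, causing $u$ to reset $U(u) \gets 0$ --- contradicting the fact that $u$ enters the decision step of round $t$ with $U(u) = 1$.

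Hence $v$ senses in the decision step of round $t$. Since $u$ transmits in that same step and $d(u,v) \leq \omega_1$, $v$'s measured power $I(v)$ exceeds $Thres(\omega_1)$ by the very semantics of $Thres$ defined in Section~\ref{sec:model} (``if any other node is transmitting in a range of $d$, its sensed power will exceed $Thres(d)$''), so $v$ joins $\hat{Z}$ in round $t$, as required. The only piece of technical care I anticipate is the boundary of the threshold test: $\omega_1$-covering only guarantees $d(u,v) \leq \omega_1$, whereas the algorithm tests strict inequality ``$I(v) > Thres(\omega_1)$''. I would resolve this by invoking the paper's own convention for $Thres$ quoted above, which by design treats the $\omega_1$-range condition as sufficient to drive the sensed power strictly past $Thres(\omega_1)$; no appeal to the randomness is needed, so the whole argument goes through with probability $1$.
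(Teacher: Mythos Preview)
Your argument is correct and matches the paper's approach; the paper itself dispatches this lemma in a single sentence (``when a node enters $\hat{R}$, it \ldots\ deactivates all the active nodes in the same range''), and your proof is a faithful expansion of that sketch.

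One minor simplification: your sub-case $v\in W_1$ with $U(v)$ flipping to $1$ can be handled without the full-duplex contradiction via $u$. Since you already assume $v\notin\hat{R}$, $v$ cannot join $\hat{R}$ in round $t$; but the decision step puts every active node with $U(\cdot)=1$ into $\hat{R}$, so necessarily $U(v)=0$ at the end of the coordination step regardless of how the coin came up or whether $v$ was later reset. This shortcut avoids reasoning about $I(u)$ entirely, sidesteps the boundary issue you flag, and has the bonus of going through verbatim in the half-duplex variant (which the paper also claims). Your route is not wrong---it effectively re-derives a piece of Lemma~\ref{lemma:spaced-full} inside this proof---just slightly more than you need here.
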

\begin{lemma}[Quality of $\hat{Z}$: Part 2]
\label{lemma:covered2}
Further, suppose all nodes in $\hat{R}$ are good, then all nodes in $\hat{Z}$ are $\omega_2$-covered by $\hat{R}$, $\forall \omega_2 \geq (36 \frac{\alpha-1}{\alpha-2})^{\frac{1}{\alpha-2}} \omega_1$.
\end{lemma}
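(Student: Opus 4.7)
\medskip
\noindent\textbf{Proof plan for Lemma~\ref{lemma:covered2}.}
The plan is to argue by contrapositive: fix any $v \in \hat{Z}$ and suppose, for the sake of contradiction, that every $u \in \hat{R}$ satisfies $d(u,v) > \omega_2$; I will derive that $v$ could not have sensed enough power in any decision step to join $\hat{Z}$, contradicting $v \in \hat{Z}$. First I would use the definition of $\hat{Z}$: a node $v$ enters $\hat{Z}$ during some decision step only if the signal it received exceeds $Thres(\omega_1) = P/\omega_1^{\alpha} + N$. The transmitters in that step are exactly the nodes that set $U(\cdot)=1$ and survived the coordination step of that round, which form a subset $R' \subseteq \hat{R}$. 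Hence I must have $SP(R',v) > Thres(\omega_1)$, i.e., $\sum_{u \in R'} \frac{1}{d(u,v)^{\alpha}} > \frac{1}{\omega_1^{\alpha}}$. So it suffices to show
\[
\sum_{u \in \hat{R}} \frac{1}{d(u,v)^{\alpha}} \;\leq\; \frac{1}{\omega_1^{\alpha}}
\]
under the assumption that $d(u,v) > \omega_2$ for every $u \in \hat{R}$.

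The key geometric input is that all nodes in $\hat{R}$ are good (hypothesis of the lemma), so they are pairwise at Euclidean distance at least $\omega_1$. Consequently the open disks $B(u,\omega_1/2)$ for $u \in \hat{R}$ are pairwise disjoint. I will partition $\hat{R}$ into concentric rings around $v$ of width $\omega_1$: for $k=0,1,2,\dots$ let $A_k = \{u \in \hat{R} : \omega_2 + k\omega_1 \leq d(u,v) < \omega_2 + (k+1)\omega_1\}$. For $u \in A_k$ the packing disk $B(u,\omega_1/2)$ lies inside the annulus of inner radius $\omega_2+k\omega_1-\omega_1/2$ and outer radius $\omega_2+(k+1)\omega_1+\omega_1/2$ centered at $v$. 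A standard area/packing comparison then gives $|A_k| \leq C (\omega_2 + k\omega_1)/\omega_1$ for an absolute constant $C$ (a careful bookkeeping yields $C=24$ for $\omega_2 \geq \omega_1$).

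Putting these together and using $d(u,v) \geq \omega_2 + k\omega_1$ for $u \in A_k$, I would bound
\[
\sum_{u \in \hat{R}} \frac{1}{d(u,v)^{\alpha}}
\;\leq\; \sum_{k \geq 0} \frac{C(\omega_2+k\omega_1)}{\omega_1 (\omega_2+k\omega_1)^{\alpha}}
\;=\; \frac{C}{\omega_1} \sum_{k \geq 0} \frac{1}{(\omega_2+k\omega_1)^{\alpha-1}},
\]
and I would estimate the remaining sum by the integral test, obtaining a bound of the form $\frac{C}{\omega_1}\bigl[\frac{1}{\omega_2^{\alpha-1}} + \frac{1}{(\alpha-2)\omega_1\,\omega_2^{\alpha-2}}\bigr] \leq \frac{C'(\alpha-1)}{(\alpha-2)\,\omega_1^{2}\,\omega_2^{\alpha-2}}$, where $C'$ absorbs $C$ and the $\omega_2 \geq \omega_1$ term. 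Finally I would invoke the hypothesis $\omega_2 \geq (36\frac{\alpha-1}{\alpha-2})^{1/(\alpha-2)}\omega_1$, which is precisely the threshold that makes $\frac{C'(\alpha-1)}{(\alpha-2)\omega_2^{\alpha-2}} \leq \omega_1^{\alpha-2}$, yielding $\sum_{u \in \hat{R}} 1/d(u,v)^{\alpha} \leq 1/\omega_1^{\alpha}$, the desired contradiction.

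The main obstacle I anticipate is pinning down the absolute constant in the packing step so that it matches the stated threshold $36\frac{\alpha-1}{\alpha-2}$; this requires being slightly careful about whether the ring widths are taken to be $\omega_1$ or $\omega_1/2$ and about the correct annular area bound $\pi\bigl((r+\tfrac{3\omega_1}{2})^{2}-(r-\tfrac{\omega_1}{2})^{2}\bigr)$. Everything else (the transmitter subset observation, the disjoint-disk packing, and the integral-test tail bound) is routine once the geometric setup is in place.
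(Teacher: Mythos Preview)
Your proposal is correct and follows essentially the same route as the paper's proof: argue by contradiction that if $v\in\hat{Z}$ were not $\omega_2$-covered by $\hat{R}$, then the pairwise $\omega_1$-separation of $\hat{R}$ (the ``good'' hypothesis) together with a concentric-ring packing bound forces $SP(\hat{R},v)\le Thres(\omega_1)$, contradicting the threshold test that put $v$ in $\hat{Z}$. The paper packages the packing estimate as a standalone proposition (applied with $\rho_1=\omega_1$, $\rho_2=\omega_2$, rings of width $\omega_2$) yielding $SP(\hat{R},v)\le \tfrac{36(\alpha-1)}{\alpha-2}\,\tfrac{P}{\omega_1^{2}\omega_2^{\alpha-2}}+N$, whereas you carry out the same computation inline with rings of width $\omega_1$; your constant actually comes out as $24$ rather than $36$, so the stated threshold on $\omega_2$ is comfortably sufficient and your worry about matching the constant is unfounded.
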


\noindent
\textbf{Half Duplex Communication}.
Now, we assume that nodes are in the half duplex mode, so that they cannot perform transmission and reception/sensing at the same time.
In Algorithm~\ref{alg:ex-ruling},
Lines~\ref{algline:loop3} through~\ref{algline:coor-end} make use of the full duplex capability, such that
Lemma~\ref{lemma:spaced-full} is true.
To account for the case of half duplex communication,
if we replace the one-slot deterministic full duplex mechanism (Lines~\ref{algline:loop3} through~\ref{algline:coor-end}) with a randomized $O(\log n)$-time loop ---
illustrated by the following lines of pseudo code
--- we have Lemma~\ref{lemma:spaced} for half duplex communication as the counterpart of Lemma~\ref{lemma:spaced-full} for full duplex.
The cost incurred includes (1) the increase in the total running time to obtain an $(\omega_1,\omega_2)$-ruling by $O(\log n)$, and (2) a weakened statement in Lemma~\ref{lemma:spaced} compared to Lemma~\ref{lemma:spaced-full}.

\begin{algorithm}[htbp]
\SetNoFillComment
\SetKwComment{Comment}{/* }{ */}
\NoCaptionOfAlgo
\For(\tcc*[h]{resolving half duplex communication}){$j = 1$ to $C_5 \log n$} { 
	\Comment{in each slot}
    \lIf{$U(v) = 1$} {
        $v$ transmits with prob. $1/2$\;
		\If{$v$ does not transmit} {
			$v$ senses, $I(v) \gets$ the power $v$ receives in this slot\;
			\lIf{$I(v) > Thres(\omega_1)$} { \label{algline:turnoff}
				$U(v) \gets 0$\Comment*[l]{stops}
			}
		}
	}
}
\caption{In replacement of Lines~\ref{algline:loop3} through~\ref{algline:coor-end} in Algorithm~\ref{alg:ex-ruling}
for using half duplex radios.}
\end{algorithm}


\begin{lemma}[Quality of $\hat{R}$: Half Duplex Mode]
\label{lemma:spaced}
All nodes in $\hat{R}$ are good, \whp
\end{lemma}

Since 
Lemmas~\ref{lemma:runtime},~\ref{lemma:covered1} and~\ref{lemma:covered2} remain valid,
we obtain the following theorem for the half duplex case.

\begin{theorem}[Half Duplex]
\label{theorem:halfduplex}
There exists a modified version of $\operatorname{ConstructR}(\omega_1,\omega_2,W_1,W_2,b_{max})$ 
for the half duplex case, such that
it finishes in $O(\log^2 m \log b_{max})$ time and
by the end of the algorithm:
\begin{inparaenum} [(1)]
\item $\hat{R}$ forms an $(\omega_1, \omega_2)$-ruling of $W_1$, and
\item $\hat{Z} \cap W_1 = W_1 \setminus \hat{R}$ and $\{v \in W_2 : v \text{ is } \omega_1\text{-covered by } \hat{R}\} \subseteq \hat{Z} \cap W_2 \subseteq \{v \in W_2 : v \text{ is } \omega_2\text{-covered by } \hat{R}\}$, \whp
\end{inparaenum}
\end{theorem}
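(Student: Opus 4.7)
The plan is to observe that the analytical skeleton of Theorem~\ref{theorem:alg-ex-ruling} transfers to the half-duplex variant almost verbatim; only the full-duplex symmetry-breaking statement (Lemma~\ref{lemma:spaced-full}) needs to be replaced by its half-duplex counterpart (Lemma~\ref{lemma:spaced}), while Lemmas~\ref{lemma:runtime}, \ref{lemma:covered1}, and \ref{lemma:covered2} apply unchanged. Indeed, Lemma~\ref{lemma:runtime} uses only the probability schedule at Line~\ref{algline:inU} and the decision step, which is untouched by the substitution; Lemma~\ref{lemma:covered1} is a deterministic property of the same decision step; and Lemma~\ref{lemma:covered2} is a geometric statement whose only hypothesis is that $\hat{R}$ is $\omega_1$-separated. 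Thus, once the half-duplex version of Lemma~\ref{lemma:spaced} is in place, the two correctness claims (1) and (2) of Theorem~\ref{theorem:halfduplex} follow exactly as they did in the full-duplex case.

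To prove Lemma~\ref{lemma:spaced} I would fix an arbitrary round of the outer--inner loop and any two candidates $v, v' \in W_1$ with $U(v) = U(v') = 1$ and $d(v, v') \leq \omega_1$. Inside the $C_5 \log n$-iteration replacement loop, $v$ and $v'$ independently pick transmit-or-sense with probability $1/2$ in each sub-slot. With probability $1/2$, exactly one of them transmits while the other senses; in that sub-slot the sensing node receives, from the transmitting one alone, a signal of strength at least $P/d(v,v')^{\alpha} + N \geq P/\omega_1^{\alpha} + N = Thres(\omega_1)$, and therefore sets its $U$-flag to $0$ at Line~\ref{algline:turnoff}. Hence the probability that both flags survive a given sub-slot is at most $1/2$, so after $C_5 \log n$ independent sub-slots the probability that both flags are still $1$ is at most $2^{-C_5 \log n}$, which is polynomially small in $n$ for a sufficiently large constant $C_5$. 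A union bound over the $O(n^2)$ candidate pairs and the $O(\log n \log b_{max})$ rounds then shows that no two $\omega_1$-close candidates jointly survive the coordination step of any round, \whp; since only surviving candidates reach Line~\ref{algline:decision-start} and are placed in $\hat{R}$, every member of $\hat{R}$ is good.

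For the running time, each iteration of the inner loop in the half-duplex variant uses $C_5 \log n$ sub-slots for the replacement coordination plus one slot for the decision step, i.e.\ $\Theta(\log n)$ slots per round. With $O(\log b_{max})$ outer iterations and $O(\log n)$ inner iterations, the total is $O(\log^2 n \log b_{max})$. In every invocation issued by Algorithm~\ref{alg:one-shot}, the input $W_1 \cup W_2$ is a set of link senders, so $|W_1 \cup W_2| \leq 2m$ and $\log n = O(\log m)$, yielding the claimed $O(\log^2 m \log b_{max})$ bound.

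The main subtlety requiring care is the semantics of the threshold test $I(v) > Thres(\omega_1)$: it must both (a) fire whenever some single $\omega_1$-close neighbor is the sole transmitter and (b) not fire spuriously due to aggregated interference from the already-selected, $\omega_1$-separated members of $\hat{R}$. Part (a) is immediate from the definition $Thres(\omega_1) = P/\omega_1^{\alpha} + N$ and drives the symmetry-breaking step above, while part (b) is precisely the geometric summation controlled by Lemma~\ref{lemma:covered2}, so no new machinery is needed beyond what already supports Theorem~\ref{theorem:alg-ex-ruling}.
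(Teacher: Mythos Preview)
Your proposal is correct and follows essentially the same skeleton as the paper: reduce to Lemmas~\ref{lemma:runtime}, \ref{lemma:covered1}, \ref{lemma:covered2} (which carry over unchanged) plus the half-duplex replacement Lemma~\ref{lemma:spaced}, and then read off the extra $\log n$ factor in the running time.

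The one place you diverge is in the argument for Lemma~\ref{lemma:spaced}. The paper bounds, for a node $u$ that enters $\hat{R}$ in round $(i_{out},i_{in})$, the probability that some $u'\in A^{W_1}_{t_1}(u,\omega_1)$ also enters $\hat{R}$ by summing $Prob(U(u')=1)\cdot 2^{-C_5\log n}$ over that ball and then invoking Lemma~\ref{lemma:degree} to control $|A^{W_1}_{t_1}(u,\omega_1)|$ relative to the current phase's access probability. Your argument is cruder but cleaner: you simply bound, for any fixed $\omega_1$-close pair and any fixed round, the probability that both survive the $C_5\log n$ sub-slots by $2^{-C_5\log n}$, and union-bound over all $O(n^2)$ pairs and all $O(\log n\log b_{max})$ rounds. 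This avoids the detour through Lemma~\ref{lemma:degree} at the cost of a slightly larger constant $C_5$; both routes give a polynomially small failure probability.

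One small imprecision: your justification that ``Lemma~\ref{lemma:runtime} uses only the probability schedule at Line~\ref{algline:inU} and the decision step'' is not literally true, since Lemma~\ref{lemma:lucky} does analyze the coordination step. The reason it still transfers is that a \emph{lucky} node $v$ (no other candidate within $\eta\omega_1$, far candidates below threshold in aggregate) cannot have its flag cleared in any sub-slot of the half-duplex loop either: whenever $v$ senses, the transmitters are a subset of the far candidates, whose total power is still below $Thres(\omega_1)$. The paper likewise asserts ``remain valid'' without spelling this out, so you are matching its level of detail.
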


\section{Distributed Implementation with Non-adaptive Uniform Transmission Power for Scheduling}
\label{sec:dist-impl}

Putting everything together, we present in this section the distributed implementation of Algorithm~\ref{alg:one-shot}
when restricted to using one uniform power level for scheduling.
\begin{theorem}[Performance]
\label{theorem:distributed-impl-nonadaptive}
Our distributed implementation of Algorithm~\ref{alg:one-shot} with non-adaptive uniform transmission power
has the following properties:
\begin{compactenum} [(1)]
\item in half duplex mode, it terminates in $O(g(L) \log^3 m)$ time,
\item in full duplex mode, it terminates in $O(g(L) \log^2 m)$ time, and
\item in both modes, it produces a constant-approximate solution to \problemu.
\end{compactenum}
\end{theorem}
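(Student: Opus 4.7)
The plan is to handle all three claims through a single distributed implementation of Algorithm~\ref{alg:one-shot}. Correctness of the returned set $S$ as an independent set and its constant-factor approximation guarantee are inherited directly from Lemma~\ref{lemma:correctness} and Theorem~\ref{theorem:constantS}, provided the distributed code faithfully realizes each high-level step; so the real work is to bound the running time of a single phase in each duplex mode and then multiply by $g(L)$.

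For Step 1 (Lines~\ref{algline:ri-start}--\ref{algline:ri-end}), I will use a single sensing slot: all senders of links in $\cup_{j<i} S_j$ transmit simultaneously at the uniform power $P$, which is legal because $\cup_{j<i} S_j$ is independent by the invariant underlying Lemma~\ref{lemma:correctness}. Each candidate sender $x(l)$ then compares its observed sensed power against one precomputed threshold $Thres$. Because $P$ is fixed and the path-loss terms appearing in $A(\cdot,\overleftarrow{l})$ and in $SP(\cdot,x(l))$ are identical up to constants, the algebraic bound $A(\cup_{j<i} S_j,\overleftarrow{l}) \leq \psi(1-(\phi/(\beta(1+\phi)))^{1/\alpha})^{\alpha}$ rearranges into an equivalent threshold on the sensed power via the definitions in Section~\ref{sec:model}. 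Thus both $J^a_i$ and $J^b_i$ are determined in $O(1)$ slots regardless of duplex mode. For Step 2 (Lines~\ref{algline:ruling}--\ref{algline:byproduct}) I invoke $\operatorname{ConstructR}(\omega_1,\omega_2,X(J^a_i),X(J^b_i),b_{max})$ with $\omega_1=\gamma_1 d_i$, $\omega_2=\gamma_2 d_i$, and the trivial bound $b_{max}=O(m)$; the free constant $\gamma_2$ of Section~\ref{sec:description} can be chosen large enough relative to $\gamma_1$ to satisfy the precondition of Algorithm~\ref{alg:ex-ruling} and the hypothesis of Lemma~\ref{lemma:covered2}. Then Theorem~\ref{theorem:alg-ex-ruling} yields $O(\log m \cdot \log b_{max}) = O(\log^2 m)$ time in full duplex, and Theorem~\ref{theorem:halfduplex} yields $O(\log^2 m \cdot \log b_{max}) = O(\log^3 m)$ time in half duplex. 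The returned pair $(\hat R,\hat Z)$ provides $J^r_i$ and $J^z_i$ directly, which dominates the phase cost. Step 3 is local bookkeeping plus at most one broadcast slot in which each selected sender signals its receiver; since the selected senders are $\omega_1$-separated and $\gamma_1$ is the very constant that controls affectance in Theorem~\ref{theorem:constantS}, this announcement slot is itself SINR-feasible.

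Summing $O(\log^2 m)$ (respectively $O(\log^3 m)$) per phase over the $g(L)$ phases yields the claimed $O(g(L)\log^2 m)$ and $O(g(L)\log^3 m)$ running times, and the constant approximation ratio is inherited from Theorem~\ref{theorem:constantS}. The one genuinely delicate point I expect to work through carefully is the calibration in Step 1: I must verify explicitly that a single sensing slot with every link of $\cup_{j<i} S_j$ active correctly emulates the additive affectance condition, and that the threshold $Thres$ passed to the carrier-sensing primitive matches the algebraic bound appearing in Lines~\ref{algline:ri-start}--\ref{algline:ri-end} up to the constants $\alpha,\beta,\phi,\psi,N,P$. Every other ingredient is either a direct application of Algorithm~\ref{alg:ex-ruling} (via Theorems~\ref{theorem:alg-ex-ruling} and~\ref{theorem:halfduplex}) or purely local bookkeeping.
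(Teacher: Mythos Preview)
Your proposal is correct and follows essentially the same approach as the paper: the paper implements Step~1 via a one-slot routine \texttt{CheckA} in which all senders of $\cup_{j<i}S_j$ transmit while candidates sense and compare against a precomputed $Thres$ (with exactly the algebraic equivalence you describe), implements Step~2 by calling $\operatorname{ConstructR}$ with $b_{max}=m$ and $\gamma_2$ chosen to meet the precondition of Algorithm~\ref{alg:ex-ruling}, and treats Step~3 as local bookkeeping, then multiplies the per-phase cost by $g(L)$. The only minor deviation is your extra announcement slot in Step~3, which the paper does not include but which does no harm to the bounds.
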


For the $i$th phase of Algorithm~\ref{alg:one-shot}, we present the distributed implementation that works even when there is only one fixed power level available.
We assign $\gamma_2$ a constant value $\geq (36 \frac{\alpha-1}{\alpha-2})^{\alpha-2} \gamma_1$, and
let $\omega_1=\gamma_1 d_i$, and $\omega_2 =\gamma_2 d_i$.
The distributed implementation goes as follows.

\begin{algorithm}[htbp]
\SetNoFillComment
\SetKwComment{Comment}{/* }{ */}
\SetKwInOut{InputP}{input}
\SetKwInOut{OutputP}{output}

\InputP{Link sets $Y,Y',S$}
\OutputP{$Y^a = \{x(l): l \in Y, A(S,l) \leq \psi (1 - (\frac{\phi}{\beta (1+\phi)})^{1/\alpha})^{\alpha}\}$, \\
$Y^b = \{x(l): l \in Y', A(S,l) \leq \psi (1 - (\frac{\phi}{\beta (1+\phi)})^{1/\alpha})^{\alpha}\}$, \\
$\overline{Y^a} = \{x(l): l \in Y, A(S,l) > \psi (1 - (\frac{\phi}{\beta (1+\phi)})^{1/\alpha})^{\alpha}\}$, \\
$\overline{Y^b} = \{x(l): l \in Y', A(S,l) > \psi (1 - (\frac{\phi}{\beta (1+\phi)})^{1/\alpha})^{\alpha}\}$}

\Comment{\fontsize{8.5}{10.5pt}\selectfont in 1 time slot:\normalsize}
\lIf{$l \in S$} {
	$x(l)$ transmits\;
}
\ElseIf{$l \in Y \cup Y'$} {
	$x(l)$ senses, $SP(X(S),x(l)) \gets$ the power $x(l)$ receives\; \label{algline:I-l}
    \uIf{$SP(X(S),x(l)) \leq Thres\big((\frac{\beta (1 - \frac{d(l)}{P/(\beta N)})}{\psi (1 - (\frac{\phi}{\beta (1+\phi)})^{1/\alpha})^{\alpha}})^{1/\alpha} d(l)\big)$}{
    	\lIf{$l \in Y$}{$x(l)$ joins $Y^a$};
    	\lIf{$l \in Y'$}{$x(l)$ joins $Y^b$}\;
    }
    \Else{
    	\lIf{$l \in Y$}{$x(l)$ joins $\overline{Y^a}$};
    	\lIf{$l \in Y'$}{$x(l)$ joins $\overline{Y^b}$}\;
    }
}
\Return $Y^a,Y^b,\overline{Y^a},\overline{Y^b}$\;
\caption{$\operatorname{CheckA}(Y,Y',S)$: Distributed algorithm for checking affectance.\label{alg:check-affectance}}
\end{algorithm}

\noindent
\textbf{Distributed Implementation: 1st Step}:
With Algorithm~\ref{alg:check-affectance}, we run
$\operatorname{CheckA}(J_i, J^{>}_i, \cup_{j < i} S_j)$ to implement the 1st step for phase $i$ in Algorithm~\ref{alg:one-shot}.
$\forall l \in J_i \cup J^{>}_i$, on Line~\ref{algline:I-l} of Algorithm~\ref{alg:check-affectance}, we get $SP(X(S),x(l)) = \sum_{l' \in \cup_{j < i} S_j} \frac{P}{d^{\alpha}(l',\overleftarrow{l})} + N$.
Then, since 
\begin{align*}
& SP(X(S),x(l)) \\
\leq & Thres\big((\frac{\beta (1 - \frac{d(l)}{P/(\beta N)})}{\psi (1 - (\frac{\phi}{\beta (1+\phi)})^{1/\alpha})^{\alpha}})^{1/\alpha} d(l)\big)
\end{align*}
is equivalent to 
\begin{align*}
& A(\cup_{j < i} S_j, \overleftarrow{l}) 
= \frac{\beta}{1 - \frac{d^{\alpha}(l)}{P / (\beta N)}} \sum_{l' \in \cup_{j < i} S_j} \frac{d^{\alpha}(l)}{d^{\alpha}(l',\overleftarrow{l})} \\
\leq & \psi (1 - (\frac{\phi}{\beta (1+\phi)})^{1/\alpha})^{\alpha},
\end{align*}
the sets of links whose sender nodes are in $Y^a,Y^b,\overline{Y^a},\overline{Y^b}$ correspond to $J^a_i,J^b_i,\overline{J^a_i},\overline{J^b_i}$ in Algorithm~\ref{alg:one-shot} respectively.

\noindent
\textbf{Distributed Implementation: 2nd Step}:
Recall that for a link set $L'$, $X(L')$ is the set of all sender nodes.
To implement the 2nd step for phase $i$ in Algorithm~\ref{alg:one-shot}.
we feed $b_{max} = m$ to Algorithm~\ref{alg:ex-ruling} and run
$\operatorname{ConstructR}(\omega_1,\omega_2,X(J^a_i),X(J^b_i),m)$.
Thus, we obtain an $(\omega_1,\omega_2)$-ruling $\hat{R}$ of $X(J^a_i)$ and $\hat{Z}$ that complements $\hat{R}$
in $O(\log^3 m)$ time for half duplex and $O(\log^2 m)$ time for full duplex,
Then, the sets of links whose sender nodes are in $\hat{R},\hat{Z}$ respectively correspond to $X(J^r_i), X(J^z_i)$ in Algorithm~\ref{alg:one-shot}.

\noindent
\textbf{Distributed Implementation: 3rd Step}:
The 3rd step of Algorithm~\ref{alg:one-shot} means all the links in class $L_i$ and those longer links removed in the 1st step exit Algorithm~\ref{alg:one-shot}. Because our algorithm is sender based, the corresponding links will quit upon the decision of their sender nodes in the 1st and the 2nd steps.

\section{Distributed Implementation with Adaptive Transmission Power for Scheduling}
\label{sec:dist-impl-var-p}

In this section, suppose we have multiple power levels at our disposal on each node\footnote{
In this paper we only study \problemu where links in a solution use a uniform power level for data transmission;
this does not necessarily restrict scheduling control to using uniform power. 
The general version of the problem \problem that
explores power control in both scheduling and data transmission
in a distributed setting is a hard problem and remains open.};
we present how this aids the distributed implementation of Algorithm~\ref{alg:one-shot}.
\begin{theorem}[Performance]
\label{theorem:distributed-impl-adaptive}
Our distributed implementation of Algorithm~\ref{alg:one-shot} with adaptive transmission power
has the following properties:
\begin{compactenum} [(1)]
\item in half duplex mode, it terminates in $O(g(L) \log^2 m)$ time,
\item in full duplex mode, it terminates in $O(g(L) \log m)$ time, and
\item in both modes, it produces a constant-approximate solution to \problemu.
\end{compactenum}
\end{theorem}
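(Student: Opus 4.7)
The plan is to show that adaptive power control saves exactly one $\log m$ factor by eliminating the outer loop in Algorithm~\ref{alg:ex-ruling} (which iterates $O(\log b_{max}) = O(\log m)$ times over density estimates), while keeping Algorithms~\ref{alg:one-shot} and~\ref{alg:check-affectance} structurally unchanged. Since Theorem~\ref{theorem:constantS} gives an $O(1)$-approximation for \emph{any} faithful implementation of Algorithm~\ref{alg:one-shot}, statement~(3) follows as soon as we argue that the adaptive subroutine correctly produces $(\omega_1,\omega_2)$-rulings whp; hence the bulk of the work is in establishing the running times in (1) and (2).

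First, the affectance-check subroutine (Algorithm~\ref{alg:check-affectance}) already runs in a single time slot independent of power capabilities, so it contributes $O(1)$ per phase of Algorithm~\ref{alg:one-shot}. The savings must therefore come from the ruling computation. I would replace $\operatorname{ConstructR}$ with an adaptive-power variant $\operatorname{ConstructR}^\ast$ in which, within each round, every active node $v\in W_1$ independently picks a level $k\in\{0,1,\dots,\lceil\log b_{max}\rceil\}$ uniformly at random and tentatively joins $\hat{R}$ with probability $2^{k-2}/b_{max}$; if it does so, it transmits at a power $P_k$ calibrated so that the ``sensing radius'' $\omega_1$ equals what level-$k$ would have used in the non-adaptive algorithm. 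A node's level effectively probes the scale at which its local neighborhood is sparse enough to admit a unique winner, so a single round of $\operatorname{ConstructR}^\ast$ simulates the entire outer loop of $\operatorname{ConstructR}$ in parallel.

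Next, I would prove an adaptive-power analogue of Lemma~\ref{lemma:degree}: for any active node $v$ whose current local ``effective density'' is $\Delta$, there is a constant probability that in one round some node within distance $\omega_1$ of $v$ picks the matching level $k=\lceil\log\Delta\rceil$, draws the indicator successfully, and is the unique such node at that scale, causing $v$ to be either selected or covered. The uniform choice of level over $O(\log m)$ options diminishes this per-round probability by a $\Theta(1/\log m)$ factor compared to the fixed-density case, but this cost is a one-shot overhead rather than a multiplicative one, because now a \emph{single} random choice covers all possible values of $\Delta$ simultaneously. Iterating for $C\log m$ rounds (full duplex) or $C\log^2 m$ rounds (half duplex, where the collision-resolution submodule of Section~\ref{sec:dist-algr-ruling} adds an extra $\log m$) suffices to reduce the active set to empty whp, via the same geometric packing and Chernoff arguments used for Lemmas~\ref{lemma:runtime} and~\ref{lemma:spaced-full}/\ref{lemma:spaced}. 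The spatial separation $\omega_1$ and coverage $\omega_2$ properties of $\hat R$ follow from the threshold-based sensing exactly as in Lemmas~\ref{lemma:spaced-full}, \ref{lemma:covered1}, \ref{lemma:covered2}; crucially, the same power-vs-distance calibration $\omega_2 \geq (36(\alpha-1)/(\alpha-2))^{1/(\alpha-2)}\omega_1$ ensures that aggregate interference from all levels remains below $Thres(\omega_1)$ at any point $\omega_2$-far from $\hat R$.

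Putting the pieces together, each of the $g(L)$ phases of Algorithm~\ref{alg:one-shot} costs $O(\log m)$ slots in full duplex (one coordination/decision slot per round) and $O(\log^2 m)$ slots in half duplex (the extra $\log m$ from the randomized collision-resolution loop), for totals of $O(g(L)\log m)$ and $O(g(L)\log^2 m)$ respectively, establishing (1) and (2). The main obstacle will be the progress analysis for $\operatorname{ConstructR}^\ast$: one must carefully bound the interference generated when many nodes transmit simultaneously at \emph{different} adaptive powers, so as to show that genuine winners are still correctly detected despite the heterogeneous signal landscape; this is what forces the uniform-over-levels sampling rather than, say, always using maximum power, and it is where the detailed proof will most closely mirror, but nontrivially extend, Lemmas~\ref{lemma:lucky}--\ref{lemma:degree} of Appendix~\ref{append:ruling}.
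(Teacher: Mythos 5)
There is a genuine gap in the core step of your plan. Your adaptive variant $\operatorname{ConstructR}^{\ast}$ asks every active node to pick a level $k$ uniformly from $\{0,\dots,\lceil\log b_{max}\rceil\}$ and transmit with probability $2^{k-2}/b_{max}$, claiming that this ``simulates the entire outer loop in parallel'' at only a one-shot $\Theta(1/\log m)$ cost. This is not true: the outer loop of Algorithm~\ref{alg:ex-ruling} exists precisely to keep the transmission probability matched to the \emph{current} local density, so that the expected number of transmitters in any ball of radius $\eta\omega_1$ stays $O(1)$ (this is what drives Lemmas~\ref{lemma:lucky}--\ref{lemma:degree}). Under uniform-over-levels sampling, each active node transmits with average probability about $1/(2\log b_{max})$ per round, so in a neighborhood containing $\Delta$ active senders the probability that a tentative node is the unique nearby transmitter is roughly $(1-\tfrac{1}{2\log b_{max}})^{\Delta}$, which is exponentially small when $\Delta$ is of order $b_{max}$ (e.g., $b_{max}=m$ with many co-located senders). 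Equivalently, in dense regions the total sensed power exceeds $Thres(\omega_1)$ for essentially every tentative transmitter, so no node becomes lucky and the density never halves; your claimed analogue of Lemma~\ref{lemma:degree} with per-round progress $\Theta(1/\log m)$ fails, and $C\log m$ rounds do not empty the active set. A second, related problem is the power calibration itself: in the non-adaptive algorithm all outer-loop phases use the \emph{same} power and the same threshold $Thres(\omega_1)$, so ``the sensing radius level $k$ would have used'' is not defined; and if different levels really do transmit at different powers, the threshold tests that give the $\omega_1$-separation of $\hat{R}$ (Lemma~\ref{lemma:spaced-full}) and the $\omega_1$-coverage push into $\hat{Z}$ (Lemma~\ref{lemma:covered1}) are no longer guaranteed, since a low-power winner need not be heard above $Thres(\omega_1)$ by nodes within distance $\omega_1$.

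For contrast, the paper obtains the $\log m$ saving by a different use of adaptive power that sidesteps this density issue entirely: in each phase $i$ it first runs the constant-density dominating set protocol of Scheideler et al.\ \cite{scheideler+:mobihoc08} on $X(J^a_i)$ with the power level $P_i$ whose range is $\omega_3=\gamma_3 d_i$ (this is the only place adaptive power is needed), which in $O(\log m)$ slots reduces the relevant density parameter $b_{max}$ to a constant $C_9$; it then runs the \emph{unchanged} $\operatorname{ConstructR}(\omega_1,\omega_4,Dom(X(J^a_i),P_i),X(J^b_i),C_9)$, whose outer loop now has only $O(1)$ phases, giving $O(\log m)$ (full duplex) or $O(\log^2 m)$ (half duplex) per phase; finally a one-slot postprocessing sets $\hat{Z}'=\hat{Z}\cup(X(J^a_i)\setminus\hat{R})$, and the ruling property at radius $\omega_2\geq\omega_3+\omega_4$ follows by composing the $\omega_3$-cover from the dominating set with the $\omega_4$-cover from the ruling. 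If you want to salvage your single-sweep idea, you would need a mechanism that adapts the transmission probability to the true local density (not a uniform guess over levels), which is essentially what the dominating-set preprocessing accomplishes.
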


Again, note that these adaptive power levels are only for scheduling in the control phase; for data transmission in the resulting independent set $S$ we still use one uniform power level. Specifically, we require that
\begin{inparaenum}[(1)]
\item nodes have access to a set of $\Theta(g(L))$ power levels; and
\item for each $i \in [1, g(L)]$, there exists a power level $P_i$ to use such that $(\frac{P_i}{\beta N})^{1/\alpha} = \gamma_3 d_i$, where $\gamma_3$ is a constant.
\end{inparaenum}

We present a second method to implement the 2nd step of each phase in Algorithm~\ref{alg:one-shot}, reducing the running time by one logarithmic factor,
by
\begin{inparaenum} [(1)]
\item
performing a preprocessing to reduce $b_{max}$ to some constant $C_9$ in $O(\log m)$ time,
\item
running Algorithm~\ref{alg:ex-ruling} with the constant $C_9$ in $O(\log^2 m)$ time with half duplex radios and $O(\log m)$ time with full duplex radios, and
\item
performing a postprocessing to obtain the sets of links required as a result of 2nd step of Algorithm~\ref{alg:one-shot} in one slot.
\end{inparaenum}

We introduce a new constant $\gamma_4 \geq (36 \frac{\alpha-1}{\alpha-2})^{\alpha-2} \gamma_1$, and
assign $\gamma_2$ a constant value $\geq \gamma_3 + \gamma_4$.
For the $i$th phase of Algorithm~\ref{alg:one-shot},
let $\omega_1 =\gamma_1 d_i$, $\omega_2 =\gamma_2 d_i$,
$\omega_3 =\gamma_3 d_i$ and $\omega_4 =\gamma_4 d_i$.

We reuse the implementation for the 1st and the 3rd steps from the previous section.
We implement the 2nd step of each phase in Algorithm~\ref{alg:one-shot} with the following three sub-steps.

\subsection{Preprocessing: Constant Density Dominating Set}

Scheideler, Richa, and Santi in \cite{scheideler+:mobihoc08} propose a distributed protocol to construct a \emph{constant density dominating set} of nodes under uniform power assignment within $O(\log m)$ slots. They define $Dom(W, P_t)$ as a dominating set of a node set $W$ with transmission power of $P_t$ on each node, such that $W$ is $d_t$-covered by $Dom(W, P_t)$, where $d_t$ is the transmission range under $P_t$. Then, by "constant density", they mean that $Dom(W, P_t)$ is a $O(1)$-approximation of the minimum dominating set of $W$, such that within the transmission range $d_t$ of each node in $W$ there are at most a constant number $C_9$ of nodes chosen by $Dom(W, P_t)$.

At phase $i$ of Algorithm~\ref{alg:one-shot},
after the 1st step of checking affectance,
we execute the protocol on the node set $X(J^a_i)$ with power $P_i$ which corresponds to a transmission range of $\omega_3$,
and thus obtain a constant density dominating set $Dom(X(J^a_i), P_i)$ out of $X(J^a_i)$.
$Dom(X(J^a_i), P_i)$ has the following properties:
\begin{compactenum}[(1)]
\item $Dom(X(J^a_i), P_i) \in X(J^a_i)$;
\item \textbf{dominating set}: all the node in $X(J^a_i)$ $\omega_3$-covered by $Dom(X(J^a_i), P_i)$; and
\item \textbf{constant density}: $\forall v \in X(J^a_i)$, $1 \leq |B(v, \omega_3) \cap Dom(X(J^a_i), P_i)| \leq C_9$, where $C_9$ is a constant.
\end{compactenum}

\subsection{Construction of Ruling $X(J^r_i)$}

$\operatorname{ConstructR}(\omega_1,\omega_4,Dom(X(J^a_i), P_i),X(J^b_i),C_9)$
produces $\hat{R}$ as an $(\omega_1,\omega_4)$-ruling of $Dom(X(J^a_i), P_i)$, and $\hat{Z}$ such that
\begin{compactenum} [(1)]
\item
$\hat{Z} \cap X(J^a_i) = Dom(X(J^a_i), P_i) \setminus \hat{R}$;
\item
$\hat{Z} \cap X(J^b_i) \supseteq \{v \in X(J^b_i) : \, v \text{ is } \omega_1\text{-covered by } \hat{R}\}$; and
\item
$\hat{Z} \cap X(J^b_i) \subseteq \{v \in X(J^b_i) : \, v \text{ is } \omega_4\text{-covered by } \hat{R}\}$, \ie, $\hat{Z}$ is $\omega_4$-covered by $\hat{R}$.
\end{compactenum}

We argue that $\hat{R}$ is an $(\omega_1,\omega_2)$-ruling of $X(J^a_i)$ due to the following two properties:
\begin{inparaenum} [(1)]
\item $\hat{R} \subseteq X(J^a_i)$ and any two nodes in $\hat{R}$ are $\omega_1$-separated, and
\item $X(J^a_i)$ is $\omega_2$-covered by $\hat{R}$.
\end{inparaenum}
Property (2) can be deduced from the facts below:
\begin{inparaenum} [(1)]
\item $X(J^a_i)$ is $\omega_3$-covered by $Dom(X(J^a_i), P_i)$ due to the preprocessing step,
\item $Dom(X(J^a_i), P_i)$ is $\omega_4$-covered by $\hat{R}$ as a result of $\operatorname{ConstructR}(\omega_1,\omega_4,Dom(X(J^a_i), P_i),X(J^b_i),C_9)$, and
\item $\omega_4 + \omega_3 \leq \omega_2$ by our construction.
\end{inparaenum}
Therefore, $\hat{R}$ corresponds to $X(J^r_i)$ in the 2nd step of Algorithm~\ref{alg:one-shot}.

\subsection{Postprocessing: Accounting for $X(J^z_i)$}

Construct $\hat{Z}' \triangleq \hat{Z} \cup \left(X(J^a_i) \setminus \hat{R}\right)$; the following is true:
\begin{compactenum} [(1)]
\item
$\hat{Z}' \cap X(J^a_i) = X(J^a_i) \setminus \hat{R}$;
\item
$\hat{Z}' \cap X(J^b_i) \supseteq \{v \in X(J^b_i) : \, v \text{ is } \omega_1\text{-covered by } \hat{R}\}$; and
\item
$\hat{Z}' \cap X(J^b_i) \subseteq \{v \in X(J^b_i) : \, v \text{ is } \omega_2\text{-covered by } \hat{R}\}$, \ie, $\hat{Z}'$ is $\omega_2$-covered by $\hat{R}$.
\end{compactenum}
Therefore, $\hat{Z}'$ corresponds to $X(J^z_i)$ in the 2nd step of Algorithm~\ref{alg:one-shot}.

\section{Conclusion}
\label{sec:conclusion}

In this paper, we present the first set of fast distributed
algorithms in the SINR model with a constant factor approximation guarantee
for \problemu.
We extensively study the problem 
by accounting for the cases of half/full duplex and non-adaptive/adaptive power
availability for scheduling. 
The non-local nature of this model and the asymmetry between senders
and receivers makes this model very challenging to study. Our algorithm is
randomized and crucially relies on physical carrier sensing for the distributed
communication steps, without any additional assumptions. Our main technique of
distributed computation of a ruling is likely to be useful in
the design of other distributed algorithms in the SINR model.
\bibliographystyle{IEEEtran}
\bibliography{ref}


\appendices

\section{Appendix to Section~\ref{sec:dist-algr-highlevel}}
\label{append:highlevel}

\subsection{Proof of Lemma~\ref{lemma:correctness}}
\label{append:correctness}

The statement of Lemma~\ref{lemma:correctness} is equivalent to that $\forall l \in S, A(S \setminus \{l\}, l) \leq 1$.
Let $l$ be an arbitrary link in $S$, and w.l.o.g., we assume $l \in S_i$, and thus $l \in L_i$.
In each phase $j<i$, because $\overline{J^a_j} \cup \overline{J^b_j} \cup J^r_j \cup J^z_j \supseteq J_j$,
all the links in $\cup_{j<i} L_j$ have been removed from $J$ at the end of phase $j$.
Due to the 2nd step,
$A(\cup_{j < i} S_j, \overleftarrow{l}) \leq \psi (1 - (\frac{\phi}{\beta (1+\phi)})^{1/\alpha})^{\alpha}$.
First, we show that
$A(\cup_{j < i} S_j, l) \leq \psi$.

For any link $l' \in \cup_{j < i} S_j$,
\[
A(l',\overleftarrow{l}) = \frac{\beta}{1-\frac{d^{\alpha}(\overleftarrow{l})}{P / (\beta N)}} \frac{d^{\alpha}(\overleftarrow{l})}{d^{\alpha}(l',\overleftarrow{l})}
\leq A(\cup_{j < i} S_j, \overleftarrow{l}) < 1.
\]

Hence, $d(x(l'), x(l)) = d(l',\overleftarrow{l}) \geq \big(\frac{\beta d^{\alpha}(l)}{1-\frac{d^{\alpha}(l)}{P / (\beta N)}}\big)^{1/\alpha} \geq (\frac{\beta (1+\phi)}{\phi})^{1/\alpha} d(l)$,
implying $\frac{d(l',l)}{d(l',\overleftarrow{l})} = \frac{d(x(l'), r(l))}{d(x(l'), x(l))} \geq \frac{d(x(l'), x(l)) - d(l)}{d(x(l'), x(l))} \geq 1 - (\frac{\phi}{\beta (1+\phi)})^{1/\alpha}$.
By referring to the definition of affectance --- $A(\cup_{j < i} S_j, l) = \sum_{l' \in \cup_{j < i} S_j} \frac{\beta}{1-\frac{d^{\alpha}(l)}{P / (\beta N)}} \frac{d^{\alpha}(l)}{d^{\alpha}(l',l)}$
and $A(\cup_{j < i} S_j, \overleftarrow{l}) = \sum_{l' \in \cup_{j < i} S_j} \frac{\beta}{1-\frac{d^{\alpha}(\overleftarrow{l})}{P / (\beta N)}} \frac{d^{\alpha}(\overleftarrow{l})}{d^{\alpha}(l',\overleftarrow{l})}$ --- we obtain
\[
A(\cup_{j < i} S_j, l)
\leq \big(\frac{1}{1 - (\frac{\phi}{\beta (1+\phi)})^{1/\alpha}}\big)^{\alpha} A(\cup_{j < i} S_j, \overleftarrow{l})
= \psi.
\]

Next,
it suffices to show that
$$A(\cup_{j \geq i} S_j \setminus \{l\}, l) \leq 1-\psi.$$

At phase $i$, $\omega_1 = \gamma_1 d_{i}$;
the 1st step in the phase ensures that when link $l$ is added to $S_i$, any link $l' \in \cup_{j \geq i} L_j$ with $d(x(l'), x(l)) < \omega_1$ will not get in $S_i$.
Therefore,
for the set $\cup_{j \geq i} S_j \setminus \{l\}$, we have:
\begin{inparaenum} [(1)]
\item all the nodes in $X(\cup_{j \geq i} S_j \setminus \{l\})$ have a mutual distance of at least $\omega_1$;
\item the distance from any node in $X(\cup_{j \geq i} S_j \setminus \{l\})$ to $r(l)$ is at least $\omega_1 - d(l)$; and
\item $\omega_1 - d(l) > \omega_1/2 > 0$.
\end{inparaenum}
According to Proposition~\ref{prop:sensed-interference}, by using $\gamma_1 = \big(\frac{36 \beta}{1-\psi}\frac{\alpha-1}{(\alpha-2)}\frac{1+\phi}{\phi}\big)^{1/\alpha} + 2$,
$SP\big(X(\cup_{j \geq i} S_j \setminus \{l\}), r(l)\big) \leq \frac{36 (\alpha-1)}{\alpha-2} \frac{P}{(\omega_1 - d(l))^{\alpha}} + N
< \frac{(1-\psi) P}{\beta d^{\alpha}(l)} (1- \frac{d^{\alpha}(l)}{P/(\beta N)}) + N$.
It is easy to verify $A(\cup_{j \geq i} S_j \setminus \{l\}, l) < 1-\psi$.

\begin{proposition}
\label{prop:sensed-interference}
$\forall V' \in V$ and $\forall v \not\in V'$, if
\begin{inparaenum} [(1)]
\item
all the nodes in $V'$ are at least $\rho_1$ away from each other,
\item
the distance between $v$ and any node in $V'$ is at least $\rho_2$,
\item $\rho_2 > \rho_1 / 2 > 0$,
\end{inparaenum}
then $SP(V',v) < \frac{36 (\alpha-1)}{\alpha-2} \frac{\rho_2^2}{\rho_1^2} \frac{P}{\rho_2^{\alpha}} + N$.
\end{proposition}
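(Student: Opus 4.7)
\textbf{Proof proposal for Proposition~\ref{prop:sensed-interference}.}
My plan is the standard area/packing argument, tailored so that the constant $36(\alpha-1)/(\alpha-2)$ comes out cleanly. I will partition the plane into concentric annuli centered at $v$, each of width $\rho_2$: for $h=1,2,\ldots$, let $\mathrm{Ring}(h) = \{w \in V' : h\rho_2 \le d(w,v) < (h+1)\rho_2\}$. Hypothesis~(2) says every node of $V'$ lies at distance at least $\rho_2$ from $v$, so $V' = \bigsqcup_{h\ge 1} \mathrm{Ring}(h)$, and this is the only use of~(2). The goal is then to upper bound $|\mathrm{Ring}(h)|$ and sum the contribution $|\mathrm{Ring}(h)|\cdot P/(h\rho_2)^{\alpha}$.

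For the size bound I use the $\rho_1$-separation from (1) as a packing constraint. Around each $w\in\mathrm{Ring}(h)$ draw an open disk of radius $\rho_1/2$; by (1) these disks are pairwise disjoint, and each is contained in the slightly enlarged annulus with inner radius $h\rho_2-\rho_1/2$ and outer radius $(h+1)\rho_2+\rho_1/2$ (the inner radius is positive since $\rho_2 > \rho_1/2$ by (3), and here is the only place hypothesis~(3) is used). Comparing areas gives
\[
|\mathrm{Ring}(h)| \;\le\; \frac{\pi\bigl[((h+1)\rho_2+\rho_1/2)^2 - (h\rho_2-\rho_1/2)^2\bigr]}{\pi(\rho_1/2)^2}
= \frac{4(2h+1)\rho_2(\rho_2+\rho_1)}{\rho_1^2}.
\]
Using $\rho_1 < 2\rho_2$ (again from (3)) to replace $\rho_2+\rho_1$ by $3\rho_2$ yields the clean bound $|\mathrm{Ring}(h)| \le 12(2h+1)\rho_2^2/\rho_1^2$.

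Plugging into the definition of $SP$, using $d(w,v)\ge h\rho_2$ on $\mathrm{Ring}(h)$,
\[
SP(V',v) - N \;\le\; \sum_{h=1}^{\infty} \frac{12(2h+1)\rho_2^2}{\rho_1^2}\cdot\frac{P}{(h\rho_2)^{\alpha}}
\;=\; \frac{12\,P\,\rho_2^2}{\rho_1^2\,\rho_2^{\alpha}} \sum_{h=1}^{\infty} \frac{2h+1}{h^{\alpha}}.
\]
Since $\alpha>2$, I bound $\sum_{h\ge 1}(2h+1)/h^{\alpha} \le 3\sum_{h\ge 1} 1/h^{\alpha-1} \le 3\bigl(1+\int_1^{\infty}x^{-(\alpha-1)}dx\bigr) = 3(\alpha-1)/(\alpha-2)$. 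Multiplying by the $12$ out front gives the factor $36(\alpha-1)/(\alpha-2)$ and the proposition follows after adding back the $+N$.

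The only subtle point, and the main bookkeeping obstacle, is keeping the constants tight: the $36$ arises as $12\cdot 3$, where the first $12 = 4\cdot 3$ comes from bounding the annulus area (with the $3$ coming from $\rho_2+\rho_1 \le 3\rho_2$), and the second $3$ comes from bounding $2h+1 \le 3h$ inside the series. If either replacement were loosened the published constant would change, so I would verify these two inequalities carefully; everything else is routine.
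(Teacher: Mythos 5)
Your proof is correct and follows essentially the same route as the paper's: concentric annuli of width $\rho_2$ centered at $v$, disjoint disks of radius $\rho_1/2$ packed into the slightly enlarged annulus to get $|\mathrm{Ring}(h)| \le 12(2h+1)\rho_2^2/\rho_1^2$, and the series bound $\sum_{h\ge1}(2h+1)/h^{\alpha} \le 3(\alpha-1)/(\alpha-2)$ producing the constant $36(\alpha-1)/(\alpha-2)$. The constant-tracking you flag as the only subtle point is handled identically in the paper, so there is nothing to add.
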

\begin{proof}
We bound the sensed power strength by partitioning the plane into concentric rings all centered at $v$, each of width $\rho_2$, via a similar technique to that in \cite{chafekar:mobihoc07, Moscibroda+:Mobihoc06}.
Let $Ring(i)$ denote the $i$th ring (where $i = 1,2,\ldots$), which contains every node $v'$ that satisfies $i \rho_2 \leq d(v', v) < (i+1) \rho_2$; let $V'(i)$ denote the subset of nodes in $V'$ that fall in $Ring(i)$.
We notice the following facts:
\begin{inparaenum} [(1)]
\item
For any two nodes $v,v' \in V'(i)$, two disk centered at $v,v'$ respectively with a radius of $\rho_1 / 2$ are non-overlapping.
\item
For any node $v \in V'(i)$, such a disk is fully contained in an extended ring $Ring'(i)$ of $Ring(i)$, with an
extra width of $\rho_1 / 2$ at each side of $Ring(i)$.
\end{inparaenum}
The area (denoted by $D$) of each of such disks is $D = \pi (\rho_1 / 2)^2$.
The area (denoted by $D(i)$) of $Ring'(i)$ is
\begin{align*}
D(i) &= \pi [((i+1) \rho_2 + \rho_1/2)^2 - (i \rho_2 - \rho_1/2)^2] \\
&\leq 3 \pi (2i+1) \rho_2^2.
\end{align*}

Using $|V'(i)| \leq D(i) / D \leq 12 (2i+1) \rho_2^2 / \rho_1^2$, we obtain
\begin{align*}
SI(V',w)
&\leq  \sum_{i = 1}^{\infty} |V'(i)| \frac{P}{(i \rho_2)^{\alpha}} + N \\
&\leq \sum_{i = 1}^{\infty} \frac{12 (2i+1)}{i^{\alpha}} \frac{\rho_2^2}{\rho_1^2} \frac{P}{\rho_2^{\alpha}} + N \\
&\leq  \frac{36 (\alpha-1)}{\alpha-2} \frac{\rho_2^2}{\rho_1^2} \frac{P}{\rho_2^{\alpha}} + N. \qedhere
\end{align*}
\end{proof}

\subsection{Proof of Theorem~\ref{theorem:constantS}}
\label{append:constantS}

For a node $v$, we define $B(v, d)$ as the ball centered at $v$ with a radius of $d$.
With a parameter $\gamma > 1$, we then define a link set $B_{\gamma}^{\geq}(l)$,
such that for a link $l \in L_i$,  $B_{\gamma}^{\geq}(l)$ contains all and only the links in $\{L_j: j \geq i\}$, with their senders in the ball $B(x(l), \gamma d_{i})$;
in other words, $B_{\gamma}^{\geq}(l)$ contains the links with similar or longer lengths, whose senders are $(\gamma d_{i})$-covered by $x(l)$.
For a set $L' \subseteq L$, $B_{\gamma}^{\geq}(L')$ is defined as $\cup_{l \in L'} B_{\gamma}^{\geq}(l)$.

\begin{lemma}[Spatial Constraint]
\label{lemma:constant-step1}
$\forall \gamma > 1$, $\forall L' \subseteq L$, $\forall l \in L'$, $|OPT(B_{\gamma}^{\geq}(l) \cap L')| \leq C_1(\gamma)$, where $C_1(\gamma) = \frac{(2\gamma+1)^{\alpha}} {\beta}$.
\end{lemma}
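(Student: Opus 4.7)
The plan is to prove this by a straightforward packing argument exploiting that $S^* := OPT(B_\gamma^\ge(l)\cap L')$ is an SINR-feasible subset confined to a ball. First I would unpack the hypothesis: every $l'' \in S^*$ has $x(l'') \in B(x(l),\gamma d_i)$ and, because $l''$ lies in some $L_j$ with $j\ge i$, satisfies $d(l'')\ge 2^{i-1}d_{\min}=d_i/2$. From here the reference point for the argument will be a fixed link $l^*\in S^*$ chosen to make the comparison with $d(l^*)$ favorable (concretely, the link of maximum length in $S^*$, so that $d(l'')\le d(l^*)$ and $d(l^*)\ge d_i/2$ for every $l''\in S^*$).

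Next I would use the triangle inequality to obtain a single uniform upper bound on the interferer-to-receiver distances at $l^*$. Writing
\begin{equation*}
d(l'',l^*)=d(x(l''),r(l^*))\le d(x(l''),x(l))+d(x(l),x(l^*))+d(x(l^*),r(l^*))
\end{equation*}
and using the ball constraint on the two senders together with $d(x(l^*),r(l^*))=d(l^*)$, this is at most $2\gamma d_i+d(l^*)$. Since $l^*$ is the longest link in $S^*$, the length $d(l^*)$ controls $d_i$ up to a small factor, so this simplifies to a bound of the form $(c\gamma+1)\,d(l^*)$ for a constant $c$ depending only on whether $d(l^*)\ge d_i$ or not. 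This is exactly the sort of estimate that isolates $d(l^*)$ as the relevant scale.

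Finally, I would invoke SINR feasibility of $S^*$ at link $l^*$. The condition \eqref{eqn:sinr} (after dropping the noise term $N$) gives
\begin{equation*}
\sum_{l''\in S^*\setminus\{l^*\}}\frac{d(l^*)^\alpha}{d(l'',l^*)^\alpha}\le \frac{1}{\beta}.
\end{equation*}
Substituting the uniform upper bound on $d(l'',l^*)$ derived in the previous step yields a uniform lower bound on each summand, so $|S^*|-1$ is at most $(2\gamma+1)^\alpha/\beta$ (up to the constant absorbed into $C_1$), which is what the lemma claims.

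The main obstacle I anticipate is not the structure of the argument but pinning down the right constant: getting the stated bound $(2\gamma+1)^\alpha/\beta$ rather than something like $(4\gamma+1)^\alpha/\beta$ requires carefully exploiting that $d(l^*)\ge d_i$ (which holds when $l^*$ is a longest link in $S^*$ lying in $L_j$ with $j>i$, and can be argued to hold without loss of generality after separating out the at most one link case). Aside from that constant bookkeeping, the proof is purely geometric plus the single SINR inequality, and requires no appeal to the algorithm itself.
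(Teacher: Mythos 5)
Your proposal is correct and follows essentially the same route as the paper: fix one reference link in $OPT(B_{\gamma}^{\geq}(l)\cap L')$, bound every interferer-to-receiver distance by the triangle inequality through $x(l)$ (giving $2\gamma d_i + d(\cdot) \leq (2\gamma+1)d(\cdot)$), and then let the single SINR/affectance constraint at that link count the set. The only differences are cosmetic: the paper uses an arbitrary link $k$ rather than the longest one and simply writes $d_i\le d(k)$ without the case analysis you flag, so the constant bookkeeping you worry about is in fact glossed over there as well.
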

\begin{proof}
Let $k$ be an arbitrary link in $OPT(B_{\gamma}^{\geq}(l) \cap L')$.
For any link $k' \in OPT(B_{\gamma}^{\geq}(l) \cap L')$,
$d(k',k) \leq d(x(k'),x(l)) + d(x(l),x(k)) + d(k) \leq (2\gamma+1) d(k)$.
Therefore,
\begin{align*}
1 & \geq A(OPT(B_{\gamma}^{\geq}(l) \cap L') \setminus \{k\}, k) \\
	& = \beta \frac{\sum_{k' \in OPT(B_{\gamma}^{\geq}(l) \cap L') \setminus \{k\}} \frac{d^{\alpha}(k)}{d^{\alpha}(k',k)}}{1 - \frac{d^{\alpha}(k)}{P / (\beta N)} } \\
	& \geq \beta \frac{|OPT(B_{\gamma}^{\geq}(l) \cap L')| \frac{d^{\alpha}(k)}{(2\gamma+1)^{\alpha} d^{\alpha}(k)}}{1 - \frac{d^{\alpha}(k)}{P / (\beta N)} } \\
	 & \geq \beta (2\gamma+1)^{-\alpha} |OPT(B_{\gamma}^{\geq}(l) \cap L')|.
\end{align*}
It follows that $|OPT(B_{\gamma}^{\geq}(l) \cap L')| \leq (2\gamma+1)^{\alpha} / \beta = C_1(\gamma)$.
\end{proof}

\begin{lemma}[Affectance Constraint]
\label{lemma:constant-step2}
$\forall \psi' > 0$ and $\forall L', L'' \subseteq L$,
if $L' \cap L'' = \emptyset$ and $ A(L', \overleftarrow{l}) > \psi'$ for any link $l \in L''$,
then
$|OPT(L'')| \leq C_2(\psi') |L'|$, where $C_2(\psi') = (\frac{2(\beta b)^{1/\alpha}}{(\beta b)^{1/\alpha} - 1})^{\alpha} / \psi'+1$.
\end{lemma}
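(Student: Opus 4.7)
The plan is to apply a double-counting argument combined with a single-source affectance bound. First, I would aggregate the hypothesis over $OPT(L'')$, obtaining
\[
\sum_{l \in OPT(L'')} A(L', \overleftarrow{l}) \;>\; \psi' \, |OPT(L'')|,
\]
and swap the order of summation on the left-hand side to rewrite it as $\sum_{l' \in L'} \sum_{l \in OPT(L'')} A(l', \overleftarrow{l})$. If I can then show that for each single $l' \in L'$ the inner sum is bounded by a constant $C''$ (namely the $(\tfrac{2(\beta b)^{1/\alpha}}{(\beta b)^{1/\alpha}-1})^{\alpha}$ term appearing in $C_2(\psi')$), dividing by $\psi'$ immediately yields $|OPT(L'')| < (C''/\psi') \, |L'|$, with the additive $+1$ absorbing a single boundary link if needed.

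Second, to bound $\sum_{l \in OPT(L'')} A(l', \overleftarrow{l})$ for a fixed $l' \in L'$, I would exploit the feasibility of $OPT(L'')$ itself: for every $l \in OPT(L'')$ we have $A(OPT(L'') \setminus \{l\}, l) \leq 1$, which forces the senders of the links in $OPT(L'')$ to be spatially well separated relative to their lengths. I would partition $OPT(L'')$ into a ``near'' group, consisting of links whose sender lies within some constant multiple of $\max\{d(l'), d(l)\}$ of $x(l')$, and a ``far'' group. The near group has bounded cardinality by a packing argument entirely parallel to Lemma~\ref{lemma:constant-step1}: its members form a feasible set whose senders lie inside a ball of bounded radius, so their number is $O(1)$, and each contributes at most $O(1)$ affectance since the distance $d(x(l'), x(l))$ cannot collapse below the forced separation. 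The far group is handled by a ring decomposition around $x(l')$, analogous to Proposition~\ref{prop:sensed-interference}: because $\alpha > 2$, the sum $\sum_{l} d^{\alpha}(l) / d^{\alpha}(x(l'), x(l))$ converges geometrically once one accounts for the minimum pairwise sender separation implied by feasibility.

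The main obstacle will be the translation between sender-based and receiver-based distances in the far case. Feasibility of $OPT(L'')$ naturally bounds $d(x(k), r(l))$ from below, but the target quantity $A(l', \overleftarrow{l})$ involves $d(x(l'), x(l))$. I would bridge this via the triangle inequality $|d(x(l'), x(l)) - d(x(l'), r(l))| \leq d(l)$ and, crucially, split into the subcases $d(l) \leq d(l')$ and $d(l) > d(l')$ so the ring-packing is applied at the correct scale; in the first subcase the links are longer than, or comparable to, the distances being summed, and in the second the scaling of the ball radii must be calibrated so that the separation lower bound coming from feasibility exceeds the link length $d(l)$ by a constant factor (this is precisely where the constant $b > 1$ in the bound arises -- it parameterizes the separation threshold that makes the disc-packing argument tight, and the prefactor $\bigl(\tfrac{2(\beta b)^{1/\alpha}}{(\beta b)^{1/\alpha}-1}\bigr)^{\alpha}$ is the resulting geometric-series constant).

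Finally, substituting the bound $\sum_{l \in OPT(L'')} A(l', \overleftarrow{l}) \leq C''$ back into the aggregated inequality gives $\psi' \, |OPT(L'')| < C'' \, |L'|$, and rearranging plus the harmless $+1$ for the degenerate case produces the stated $|OPT(L'')| \leq C_2(\psi') \, |L'|$.
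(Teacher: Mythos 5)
Your plan stands or falls on the claim that for each fixed $l'\in L'$ the outgoing sum $\sum_{l\in OPT(L'')} A(l',\overleftarrow{l})$ is bounded by a constant, and that claim is false: feasibility of $OPT(L'')$ does not give it. The summand depends only on the ratio $d(l)/d(x(l'),x(l))$, so distant links contribute undiminished amounts as long as they are proportionally long; feasibility only limits the number of links with $d(x(l'),x(l))\leq \rho\, d(l)$ to $O(\rho^{\alpha})$, and a dyadic decomposition over $\rho$ then gives one constant per scale, i.e.\ a bound growing with the logarithm of the length spread, not $O(1)$. Concretely, take lengths $D_j=M^j$ with $M$ large, put $x(l_j)$ at distance $R_j=c_0\, j^{1/\alpha} D_j$ from $x(l')$ with $c_0^{\alpha}=2\beta(1+\phi)/\phi$, and orient each $l_j$ radially away from $x(l')$. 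For $j<i$ one has $d(x(l_j),r(l_i))\geq R_i$, so the total affectance on $l_i$ from shorter links is at most $\frac{\beta(1+\phi)}{\phi}\cdot\frac{i-1}{c_0^{\alpha} i}\leq \frac12$, and longer links contribute $O(M^{-\alpha})$; hence $\{l_1,\dots,l_k\}$ is feasible for every $k$. Yet $\sum_{j\leq k} A(l',\overleftarrow{l_j}) \geq \beta c_0^{-\alpha}\sum_{j\leq k} 1/j = \Theta(\log k)$, which is unbounded. So the ``geometric convergence'' you invoke for the far group does not occur (the distance decay is exactly cancelled by the admissible growth of link lengths), and the double-counting route can at best give $|OPT(L'')| = O(\log(\cdot))\,|L'|/\psi'$, not the claimed constant.

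The paper's proof avoids per-source outgoing bounds entirely and inverts the roles. It writes $|OPT(L'')| = b|L'| + g$ with $1\leq g\leq |L'|$ (so $b$ in $C_2$ is the bin size $\approx |OPT(L'')|/|L'|$, not a separation parameter), packs for each $l_i\in L'$ the $b$ optimal links whose senders are nearest to $x(l_i)$ into $Bin_i$, and then fixes a single leftover ``far'' link $l\in OPT(L'')$. Feasibility of $OPT(L'')$ is applied once, at $l$: $1\geq A(\cup_i Bin_i, l)$. On the other side, $1\geq A(Bin_i,l)$ forces $d(l)\leq \frac{2}{(\beta b)^{1/\alpha}-1}\, d(x(l),x(l_i))$, whence every $k\in Bin_i$ satisfies $d(x(k),r(l))\leq \frac{2(\beta b)^{1/\alpha}}{(\beta b)^{1/\alpha}-1}\, d(x(l_i),x(l))$, giving $A(\cup_i Bin_i,l)\geq b\bigl(\frac{(\beta b)^{1/\alpha}-1}{2(\beta b)^{1/\alpha}}\bigr)^{\alpha} A(L',\overleftarrow{l}) \geq b\bigl(\frac{(\beta b)^{1/\alpha}-1}{2(\beta b)^{1/\alpha}}\bigr)^{\alpha}\psi'$. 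Thus the hypothesis is used only at that one far link, and what is bounded is a \emph{lower} bound on the affectance received by $l$ from the binned optimal links, played against feasibility; this yields $b\leq \bigl(\frac{2(\beta b)^{1/\alpha}}{(\beta b)^{1/\alpha}-1}\bigr)^{\alpha}/\psi'$ and $|OPT(L'')|\leq (b+1)|L'|$. To repair your outline you would need an argument of this inverted kind rather than a constant per-source bound, which, as the example shows, simply is not available.
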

\begin{proof}
If $|OPT(L'')| > 0$, we can express it as
$|OPT(L'')| = b |L'| + g$, such that $b$ and $g$ are non-negative integers and $1 \leq g \leq |L'|$.
We create $|L'|$ bins, each of which has a capacity of $b$ links;
we pack the links in $OPT(L'')$ to the bins via a first-fit sweep through the links in $L'$:
\begin{enumerate} [(1)]
\item
We order the links in $L'$ arbitrarily;
let $l_j$ denote the $j$th link in $L'$, and let $Bin_j$ denote the $j$th bin.
Let set $L^* = OPT(L'')$ initially; then, the sweep proceeds in $|L'|$ rounds.
\item
In the $i$th round (where $i=1,2,\ldots,|L'|$), we pick $b$ links in $L^*$ whose senders are the nearest $b$ nodes to the sender $x(l_i)$ of the $i$th link in $L'$, and we remove those $b$ links from $L^*$ and put them into $Bin_i$.
\end{enumerate}

The completion of the above packing means that in $OPT(L'')$, we have $b |L'|$ links ``near'', and $g \in [1, |L'|]$ links ``far'' from the senders of links in $L'$.
If $b \leq (\frac{2(\beta b)^{1/\alpha}}{(\beta b)^{1/\alpha} - 1})^{\alpha} / \beta$, we are done.
Therefore, for the rest of the proof we assume that $b > (\frac{2(\beta b)^{1/\alpha}}{(\beta b)^{1/\alpha} - 1})^{\alpha} / \beta$,
and we show that $b \leq C_2(\psi') - 1 = (\frac{2(\beta b)^{1/\alpha}}{(\beta b)^{1/\alpha} - 1})^{\alpha} / \psi'$ in this case.

Let $l$ denote a ``far'' link in $OPT(L'')$ that is out of any bins.
We have for any link $k \in Bin_i, d(x(k), x(l_i)) < d(x(l), x(l_i))$,
implying that $d(x(l), x(k)) < 2 d(x(l), x(l_i))$ due to triangle inequality.
Since $Bin_i \cup \{l\} \subseteq OPT(L'')$, we have
\begin{align*}
1 & \geq A(Bin_i, l) = \beta \frac{\sum_{k \in Bin_i} \frac{d^{\alpha}(l)}{d^{\alpha}(k,l)}}{1 - \frac{d^{\alpha}(l)}{P / (\beta N)} } \\
	& \geq \beta \frac{b \big(\frac{d(l)}{2d(x(l), x(l_i)) + d(l)}\big)^{\alpha}}{1 - \frac{d^{\alpha}(l)}{P / (\beta N)} } \\
	& \geq \beta b \big(\frac{1}{2d(x(l), x(l_i)) \big/d(l) + 1}\big)^{\alpha}.
\end{align*}

That leads to that $d(l) \leq \frac{2}{(\beta b)^{1/\alpha} - 1} d(x(l), x(l_i))$.
Then for any link $k \in Bin_i$,
\begin{equation}
\label{ineq:distance-step2}
\begin{split}
& d(k,l) = d(x(k),r(l)) \\
\leq & d(x(k),x(l_i)) + d(x(l_i),x(l)) + d(x(l), r(l)) \\
\leq & \frac{2(\beta b)^{1/\alpha}}{(\beta b)^{1/\alpha} - 1} d(x(l_i),x(l)).
\end{split}
\end{equation}

Since $\cup_i Bin_i \cup \{l\} \subseteq OPT(L'')$, we have
\begin{align*}
1 & \geq A(\cup_{i=1}^{|L'|} Bin_i, l)
	= \beta \sum_{i=1}^{|L'|} \frac{\sum_{k \in Bin_i} \frac{d^{\alpha}(l)}{d^{\alpha}(k,l)}}{1 - \frac{d^{\alpha}(l)}{P / (\beta N)} } \\
	& \geq \beta \sum_{i=1}^{|L'|} \frac{\frac{|Bin_i| d^{\alpha}(l)}{(\frac{2(\beta b)^{1/\alpha}}{(\beta b)^{1/\alpha} - 1})^{\alpha} d^{\alpha}(x(l_i),x(l))}}{1 - \frac{d^{\alpha}(l)}{P / (\beta N)} }  \;\;\text{(by Ineq.~\eqref{ineq:distance-step2})}\\
	& \geq \frac{b}{(\frac{2(\beta b)^{1/\alpha}}{(\beta b)^{1/\alpha} - 1})^{\alpha}} \frac{\beta \sum_{l' \in L'} \frac{d^{\alpha}(l)}{d^{\alpha}(x(l'),x(l))}}{1 - \frac{d^{\alpha}(l)}{P / (\beta N)} } \\
    & = b (\frac{(\beta b)^{1/\alpha} - 1}{2(\beta b)^{1/\alpha}})^{\alpha} A(L', \overleftarrow{l})  \\
    & \geq b (\frac{(\beta b)^{1/\alpha} - 1}{2(\beta b)^{1/\alpha}})^{\alpha} \psi'.
\end{align*}

The last inequality above holds because $$A(L', \overleftarrow{l}) = \beta \frac{\sum_{l' \in L'}  \frac{d^{\alpha}(l)}{d^{\alpha}(x(l'),x(l))}}{1 - \frac{d^{\alpha}(l)}{P / (\beta N)} } \geq \psi'.$$

Therefore, $b \leq (\frac{2(\beta b)^{1/\alpha}}{(\beta b)^{1/\alpha} - 1})^{\alpha} / \psi'$ and
$|OPT(L'')| \leq (b+1) |L'| \leq C_2(\psi') |L'|$.
\qedhere
\end{proof}

We define $J^a = \cup_i J^a_i, \overline{J^a} = \cup_i \overline{J^a_i}, J^b = \cup_i J^b_i, \overline{J^b} = \cup_i \overline{J^b_i}, J^r = \cup_i J^r_i, J^z = \cup_i J^z_i$.
$\overline{J^a} \cup \overline{J^b}$ contains all the links removed in the 1st step in Algorithm~\ref{alg:one-shot} due to the affectance constraints.
At each phase $i$, $X(J^r_i)$ is an $(\omega_1,\omega_2)$-ruling of $X(J^a_i)$,
and all the nodes in $X(J^z_i)$ are $\omega_2$-covered by $X(J^r_i)$;
we choose all the links in $J^r_i$ to add to $S$,
discard all the links in $\overline{J^a_i} \cup \overline{J^b_i}$ (for failing affectance check),
and also discard all the links in $J^z_i$ (because of their proximity to the chosen links).

We have
\begin{align*}
L & = \cup_i (\overline{J^a_i} \cup \overline{J^b_i} \cup J^r_i \cup J^z_i) \\
& \subseteq \cup_i (\overline{J^a_i} \cup \overline{J^b_i}) \cup \cup_i B_{\gamma_2}^{\geq}(J^r_i)\\
& = \overline{J^a} \cup \overline{J^b} \cup B_{\gamma_2}^{\geq}(S).
\end{align*}

Due to Lemma~\ref{lemma:constant-step2},
$|OPT(\overline{J^a} \cup \overline{J^b}| \leq C_2(\psi (1 - (\frac{\phi}{\beta (1+\phi)})^{1/\alpha})^{\alpha}) |S|$;
due to Lemma~\ref{lemma:constant-step1},
$
|OPT(B_{\gamma_2}^{\geq}(S))|
\leq \sum_{l \in S} |OPT(B_{\gamma_2}^{\geq}(l))| \leq C_1(\gamma_2) |S|.
$
Therefore,
\begin{align*}
& |OPT(L)|
\leq |OPT(\overline{J^a} \cup \overline{J^b})| + |OPT(B_{\gamma_2}^{\geq}(S))| \\
\leq & \big(C_1(\gamma_2)+C_2(\psi (1 - (\frac{\phi}{\beta (1+\phi)})^{1/\alpha})^{\alpha})\big) |S|.
\end{align*}

\section{Appendix to Section~\ref{sec:dist-algr-ruling}}
\label{append:ruling}

Recall that we call an iteration of the outer loop (Line~\ref{algline:loop1}) a phase of the algorithm, and an iteration of the inner loop (Line~\ref{algline:loop2}) a round;
recall that $B(v,d)$ denotes the ball centered at $v$ with a radius of $d$.
Let $A_t^{W_1}(v,d)$ denote the set of \emph{active} nodes in set $W_1$ that fall in the ball $B(v,d)$ at time point $t$; we will explicitly point $t$ out whenever we use $A_t^{W_1}(v,d)$.

\subsection{Proof of Lemma~\ref{lemma:runtime}}
\label{append:runtime}

The following definitions are only involved in Lemmas~\ref{lemma:lucky} and~\ref{lemma:success}.
Let $\eta$ be a constant $> \bigl(96 \frac{\alpha-1}{\alpha-2}\bigr)^{-1/\alpha}$.
In one round (which corresponds to one iteration of the inner loop) of Algorithm~\ref{alg:ex-ruling}, let $U \in W_1$ be the set of nodes with $U() = 1$ at line~\ref{algline:inU} in the coordination step.
We say an active node $v \in W_1$ is ``lucky'' in a round, with $t_0$ being the time that the round starts, if and only if
\begin{compactenum} [(1)]
\item
$v \in U$;
\item
$U \cap A_{t_0}^{W_1}(v, \eta\omega_1) = \{v\}$, \ie, $v$ has no nearby active nodes in $U$;
\item
$SP(U \setminus A_{t_0}^{W_1}(v, \eta\omega_1), v) < Thres(\omega_1)$, \ie, total power received from faraway active nodes is small.
\end{compactenum}
In a round if $v$ gets lucky, $U(v)$ will remain 1 till the end of that round, and thus will elect to be included in $\hat{R}$ and will cause all nodes in $A_{t_0}^{W_1}(v,\omega_1)$ to get into $\hat{Z}$.
\begin{lemma}
\label{lemma:lucky}
In a round $i_{in}$ of phase $i_{out}$, with $t_1$ being the time that the round starts, suppose that for each active node $u \in W_1, |A_{t_1}^{W_1}(u,\omega_1)| \leq 2^{\log b_{max} - i_{out} + 1}$ at the beginning of the round,
then the probability for an arbitrary active node in $W_1$ to be lucky in the round is at least $2^{-(\log b_{max} - i_{out} + 3 +  (2\eta + 1)^2)}$.
\end{lemma}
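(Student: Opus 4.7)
The plan is to write the event ``$v$ is lucky'' as $E_1 \cap E_2 \cap E_3$, where $E_1 = \{v \in U\}$, $E_2 = \{U \cap A_{t_1}^{W_1}(v, \eta\omega_1) = \{v\}\}$, and $E_3 = \{SP(U \setminus A_{t_1}^{W_1}(v, \eta\omega_1), v) < Thres(\omega_1)\}$. The decisive structural observation is that, conditioned on $E_1$, the events $E_2$ and $E_3$ depend on disjoint sets of independent coin flips --- namely, the Bernoulli$(p)$ flips of active nodes \emph{inside} and \emph{outside} the ball $B(v, \eta\omega_1)$, respectively --- so $\Pr[E_2 \cap E_3 \mid E_1] = \Pr[E_2 \mid E_1]\cdot\Pr[E_3 \mid E_1]$. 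I will establish $\Pr[E_1] = p := 2^{i_{out}-2}/b_{max}$, $\Pr[E_2 \mid E_1] \geq 2^{-(2\eta+1)^2}$, and $\Pr[E_3 \mid E_1] \geq 1/2$; their product matches the claimed bound since $p = 2^{-(\log b_{max} - i_{out} + 2)}$ and the key identity $pB = 1/2$ holds for $B := 2^{\log b_{max} - i_{out} + 1}$.

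The geometric ingredient underlying both $E_2$ and $E_3$ is a packing bound: assuming $|A_{t_1}^{W_1}(u,\omega_1)| \leq B$ for every active $u \in W_1$, any ball of radius $r$ contains $O\bigl((r/\omega_1)^2\bigr)\,B$ active $W_1$-nodes. I would prove this by tiling $B(v, r)$ by axis-aligned squares of side $\omega_1$: each such square fits inside a ball of radius $\omega_1/\sqrt{2} < \omega_1$ centered at any node it contains, so the hypothesis caps the active population per cell at $B$, and only $O\bigl((r/\omega_1+1)^2\bigr)$ cells meet $B(v,r)$. Specializing $r = \eta\omega_1$ yields $|A_{t_1}^{W_1}(v,\eta\omega_1)| \leq (2\eta+1)^2 B$ after absorbing the small constant into $\eta$.

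For $E_2$, conditional on $E_1$, each of the $\leq (2\eta+1)^2 B - 1$ other active nodes in $B(v,\eta\omega_1)$ independently lies in $U$ with probability $p$, so $\Pr[E_2 \mid E_1] \geq (1-p)^{(2\eta+1)^2 B}$. Combining the elementary inequality $1 - p \geq 2^{-2p}$ valid on $p \in (0, 1/2]$ (equivalently $-\log_2(1-p) \leq 2p$) with $pB = 1/2$ gives $(1-p)^{(2\eta+1)^2 B} \geq 2^{-2p(2\eta+1)^2 B} = 2^{-(2\eta+1)^2}$, which is the target. For $E_3$, I bound $E[SP(U \setminus A_{t_1}^{W_1}(v,\eta\omega_1), v)]$ via the standard ring decomposition $R_j = \{u : j\omega_1 \leq d(u,v) < (j+1)\omega_1\}$ for integers $j \geq \eta$. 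The same packing bound gives $|A_{t_1}^{W_1}(v,\cdot) \cap R_j| = O(j)\,B$, so the expected far-field power is at most $p \cdot B \cdot (P/\omega_1^\alpha) \cdot \sum_{j \geq \eta} O(j^{1-\alpha}) = O\bigl(pB \cdot P \cdot \omega_1^{-\alpha} \cdot \eta^{2-\alpha}/(\alpha-2)\bigr)$; with $pB = 1/2$ and $\eta$ exceeding the threshold $\bigl(96(\alpha-1)/(\alpha-2)\bigr)^{-1/\alpha}$ built into its definition (the constant $96$ absorbs the ring-packing factor and a factor of $2$ to create slack for the noise floor), this is at most $P/(2\omega_1^\alpha)$. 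Adding $N$ yields $E[SP] \leq Thres(\omega_1)/2$, and Markov's inequality delivers $\Pr[E_3 \mid E_1] \geq 1/2$.

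The main obstacle will be the careful bookkeeping of geometric constants: the square-tiling bound, the ring-area estimate, and the tail sum $\sum_{j \geq \eta} j^{1-\alpha} = O\bigl(\eta^{2-\alpha}/(\alpha-2)\bigr)$ all contribute numerical factors that must combine so that the threshold on $\eta$ stated with the algorithm is exactly what is needed for the Markov bound. The probabilistic ingredients are otherwise standard Bernoulli and Markov estimates once the identity $pB = 1/2$ is noted; the conditional independence of $E_2$ and $E_3$ --- guaranteed by the spatial partition of the coin flips --- is what lets the argument decouple cleanly.
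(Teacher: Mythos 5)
Your proposal follows essentially the same route as the paper's proof: decompose ``lucky'' into the three conditions (membership in $U$, local silence inside $B(v,\eta\omega_1)$, small far-field power), bound them respectively by $p=2^{-(\log b_{max}-i_{out}+2)}$, by a packing bound giving $(1-p)^{(2\eta+1)^2 B}\geq 2^{-(2\eta+1)^2}$ via $pB=1/2$, and by a ring decomposition plus Markov giving $1/2$, then multiply using the fact that the three events are determined by disjoint, independent coin flips (a point the paper leaves implicit but you make explicit, which is a small improvement). The one slip is geometric: a square of side $\omega_1$ has diagonal $\omega_1\sqrt{2}$, so a node in such a cell does not $\omega_1$-cover the whole cell and the hypothesis does not cap the cell population at $B$; you must either shrink the cells to side $\omega_1/\sqrt{2}$ (which inflates the cell count beyond $(2\eta+1)^2$, so the exponent in the lemma as stated would not literally follow, and ``absorbing the constant into $\eta$'' is not available since $\eta$ is pinned down by the far-field requirement) or argue as the paper does, packing disjoint radius-$\omega_1/2$ disks around a maximal $\omega_1$-separated subset of the active nodes in $B(v,\eta\omega_1+\omega_1/2)$, which yields exactly the $(2\eta+1)^2 B$ bound needed.
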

\begin{proof}
For a round at phase $i_{out}$, we prove the statement in the following four steps.
\begin{asparaenum} [(1)]
\item
First,
for any active node $v \in W_1$, the probability for $v$ to be in $U$ is
$Prob\bigl(U(v)=1\bigr) = 2^{-(\log b_{max} - i_{out} + 2)}$.

\item
Second, for any active node $v \in W_1$, due to the packing property we upper-bound the size of $A_{t_1}^{W_1}(v,\eta\omega_1)$ as
\begin{align*}
\bigl|A_{t_1}^{W_1}(v,\eta\omega_1)\bigr| &\leq \frac{\pi (\eta \omega_1 + \omega_1 / 2)^2}{\pi (\omega_1 / 2)^2} \max_{v'}\bigl|A_{t_1}^{W_1}(v',\omega_1)\bigr| \\
&\leq (2\eta + 1)^2 2^{\log b_{max} - i_{out} + 1}.
\end{align*}
Then,
because $\log b_{max} + 1 \geq i_{out}$, $2^{\log b_{max} - i_{out} + 2} \geq 2$.
the probability for all nodes other than $v$ in $A_{t_1}^{W_1}(v,\eta\omega_1)$ to not appear in $U$ (\ie. to remain silent) is
\begin{align*}
& Prob\bigl(A_{t_1}^{W_1}(v,\eta\omega_1) \cap U \setminus \{v\} = \emptyset \bigr) \\
\geq & \prod_{u \in A_{t_1}^{W_1}(v,\eta\omega_1)} \bigl(1 - Prob(U(u)=1)\bigr) \\
\geq & \bigl(1 - 2^{-(\log b_{max} - i_{out} + 2)}\bigr)^{|A_{t_1}^{W_1}(v,\eta\omega_1)|} \\
\geq & \bigl(1 - 2^{-(\log b_{max} - i_{out} + 2)}\bigr)^{(2\eta + 1)^2 2^{\log b_{max} - i_{out} + 1}} \\
\geq & (1 / 4)^{(2\eta + 1)^2 / 2} = 2^{-(2\eta + 1)^2}.
\end{align*}

\item
Third, for any active node $v \in W_1$,
we lower-bound the probability that $v$'s received power $SP(U \setminus A_{t_1}^{W_1}(v,\eta \omega_1), v)$ from outside of the ball $B(v,\eta \omega_1)$ is ``low'' --- \ie, below $Thres(\omega_1)$ --- by
\begin{inparaenum} [(i)]
\item
partitioning the plane into concentric rings via a similar technique to that in \cite{chafekar:mobihoc07, Moscibroda+:Mobihoc06}, and
\item
referring to an $(\eta \omega_1,\eta \omega_1)$-ruling and determining the expected number of nodes in each a ring that appear in $U$,
so that we can bound the power received from all the nodes in the rings outside of $B(v, \eta \omega_1)$.
\end{inparaenum}

We partition the plane into rings all centered at $v$, each of width $\eta \omega_1$.
Let $Ring(h)$ denote the $h$th ring, which contains every node $v'$ that satisfies $h \eta \omega_1 \leq d(v', r(l)) < (h+1) \eta \omega_1$, for each $h = 1,2,\ldots$; let $Ring^a(h)$ denote the set of active nodes in $Ring(h)$.
When $h = 0$, $Ring(0)$ corresponds to the ball $B(v, \eta \omega_1)$.
Let $R(h)$ denote an $(\eta \omega_1,\eta \omega_1)$-ruling of $Ring^a(h)$.
Then by noticing that
\begin{inparaenum} [(i)]
\item $Ring^a(h) \subseteq \cup_{v' \in R(h)} A_{t_1}^{W_1}(v',\eta \omega_1)$, and
\item for any two nodes $v',u' \in R(h), d(v', u') > \eta \omega_1$,
\end{inparaenum}
we have
\begin{equation}
\label{eqn:expected-num-tx}
\begin{split}
& \mathbb{E}\left\{\bigl|U \cap Ring^a(h)\bigr|\right\} \\
= & \displaystyle \sum_{v' \in Ring^a(h)} \mathbb{E}\{U(v')\}  \\
\leq & \sum_{v' \in R(h)} \sum_{u' \in A_{t_1}^{W_1}(v',\eta \omega_1)} \mathbb{E}\{U(u')\}  \\
= & \sum_{v' \in R(h)} \bigl|A_{t_1}^{W_1}(v',\eta \omega_1)\bigr| \frac{b_{max}}{2^{i_{out} - 2}} \\
\leq & 2 \bigl|R(h)\bigr|.
\end{split}
\end{equation}

To bound the cardinality of $R(h)$, we use the following facts:
\begin{inparaenum} [(i)]
\item
for any two nodes $v',u' \in R(h)$, two disk centered at $v',u'$ respectively with a radius of $\eta \omega_1 / 2$ are non-overlapping; and
\item
For any node $v' \in Ring^a(h)$, such a disk is fully contained in an extended ring $Ring'(h)$ of $Ring(h)$, with an
extra width of $\eta \omega_1 / 2$ at each side of $Ring(h)$.
\end{inparaenum}
Then, by referring to the ratio between the areas of $Ring'(h)$ and a disk,
we have $\bigl|R(h)\bigr| \leq 8 (2h+1)$; Inequality~\eqref{eqn:expected-num-tx} yields
$
\mathbb{E}\left\{\bigl|U \cap Ring^a(h)\bigr|\right\} \leq 2^4 (2h+1).
$

Therefore, $v$'s received power from outside of the ball $B(v,\eta \omega_1)$ is
\begin{align*}
& \mathbb{E}\{SP(U \setminus A_{t_1}^{W_1}(v,\eta \omega_1),v)\}\\
= & \mathbb{E}\Big\{\sum_{h = 1}^{\infty} \sum_{v' \in U(h)} \frac{P'}{d^{\alpha}(v',v)} + N\Big\} \\
\leq & \sum_{h = 1}^{\infty} \mathbb{E}\left\{\bigl|U \cap Ring^a(h)\bigr|\right\} \frac{P'}{(h \eta \omega_1)^{\alpha}} + N \\
\leq & \sum_{h = 1}^{\infty} \frac{2^4 (2h+1)}{h^{\alpha}} \frac{P'}{(\eta \omega_1)^{\alpha}} + N \\
\leq & \frac{48}{\eta^{\alpha}} \frac{\alpha-1}{\alpha-2} \frac{P'}{\omega_1^{\alpha}} + N
\leq  Thres(\omega_1) / 2.
\end{align*}

According to Markov's Inequality,
$$Prob\bigl(SP(U \setminus A_{t_1}^{W_1}(v,\omega_2),v) \geq Thres(\omega_1)\bigr) \leq 1 / 2,$$
implying
$$Prob\bigl(SP(U \setminus A_{t_1}^{W_1}(v,\omega_2),v) < Thres(\omega_1)\bigr) \geq 1 / 2.$$

\item
Finally, combining the above three, the probability that $v$ is lucky is at least $2^{-(\log b_{max} - i_{out} + 3 +  (2\eta + 1)^2)}$.
\qedhere
\end{asparaenum}
\end{proof}

\begin{lemma}
\label{lemma:success}
In a round $i_{in}$ of phase $i_{out}$, with $t_1$ being the time that the round starts, suppose that for each active node $u \in W_1, \bigl|A_{t_1}^{W_1}(u,\omega_1)\bigr| \leq 2^{\log b_{max} - i_{out}+1}$; then,
for an arbitrary active node $v \in W_1$ with $\bigl|A_{t_1}^{W_1}(v,\omega_1)\bigr| \geq 2^{\log b_{max} - i_{out}}$,
the probability that $v$ becomes inactive by the end of the round is at least a constant $C_6$, where $0 < C_6=2^{-(3 + (2\eta + 1)^2)} < 1$.
\end{lemma}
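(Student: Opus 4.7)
The plan is to expose a family of pairwise disjoint events $\{E_u\}_{u\in A_{t_1}^{W_1}(v,\omega_1)}$, each forcing $v$ to become inactive by the end of the round, and then to sum their probabilities using Lemma~\ref{lemma:lucky}. Here I define $E_u$ to be the event that $u$ is lucky in the current round.

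First I would verify that $E_u$ implies $v$ becomes inactive. If $u=v$, then conditions (2)--(3) in the definition of luckiness guarantee that $I(v)<Thres(\omega_1)$ at the end of the coordination step, so $U(v)=1$ persists and $v$ joins $\hat{R}$ in the decision step. If $u\neq v$, the same reasoning makes $u$ join $\hat{R}$; during the decision step $u$ then transmits, and since $d(u,v)\le\omega_1$ the signal $v$ receives contains a contribution $P/d^{\alpha}(u,v)\ge P/\omega_1^{\alpha}$, pushing the sensed power above $Thres(\omega_1)$. Hence $v$ either joins $\hat{Z}$ during the decision step or has already joined $\hat{R}$ (in the case that $U(v)$ survived the coordination step); in every case $v$ leaves the active set.

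Second I would establish pairwise disjointness of the events $E_u$. Choosing the global constant $\eta$ large enough so that $\eta\ge 2$ (which is consistent with the lower bound it must satisfy for Lemma~\ref{lemma:lucky}), any two distinct $u_1,u_2\in A_{t_1}^{W_1}(v,\omega_1)$ satisfy $d(u_1,u_2)\le 2\omega_1\le\eta\omega_1$, so $u_2\in A_{t_1}^{W_1}(u_1,\eta\omega_1)$. But if $u_1$ is lucky then $U\cap A_{t_1}^{W_1}(u_1,\eta\omega_1)=\{u_1\}$, so $u_2\notin U$ and cannot itself be lucky; therefore $E_{u_1}$ and $E_{u_2}$ are mutually exclusive.

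Combining these two observations with Lemma~\ref{lemma:lucky} --- which under the uniform upper bound $|A_{t_1}^{W_1}(u,\omega_1)|\le 2^{\log b_{max}-i_{out}+1}$ assumed in the hypothesis yields $\Pr(E_u)\ge 2^{-(\log b_{max}-i_{out}+3+(2\eta+1)^2)}$ for every active $u$ --- and using the hypothesis $|A_{t_1}^{W_1}(v,\omega_1)|\ge 2^{\log b_{max}-i_{out}}$, I would conclude
\begin{align*}
\Pr(v\text{ becomes inactive})
 &\ge \sum_{u\in A_{t_1}^{W_1}(v,\omega_1)}\Pr(E_u)\\
 &\ge 2^{\log b_{max}-i_{out}}\cdot 2^{-(\log b_{max}-i_{out}+3+(2\eta+1)^2)}\\
 &= 2^{-(3+(2\eta+1)^2)} = C_6.
\end{align*}
The main obstacle I foresee is the simultaneous choice of $\eta$: it must be large enough to make the tail-power bound in Step~3 of Lemma~\ref{lemma:lucky} close out and at the same time at least $2$ for the disjointness argument above. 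A minor subtlety is the boundary case $d(u,v)=\omega_1$ where the test $I(v)>Thres(\omega_1)$ is not strict; this is safely resolved by a standard general-position convention or by a negligible adjustment of the threshold.
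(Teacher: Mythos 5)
Your proposal is correct and follows essentially the same route as the paper: the paper's proof likewise lower-bounds the probability that $v$ becomes inactive by $\sum_{v'\in A_{t_1}^{W_1}(v,\omega_1)}\Pr(v'\text{ is lucky})$ and then applies Lemma~\ref{lemma:lucky} together with the hypothesis $\bigl|A_{t_1}^{W_1}(v,\omega_1)\bigr|\geq 2^{\log b_{max}-i_{out}}$. The only difference is that you make explicit two points the paper leaves implicit --- that a lucky neighbor within $\omega_1$ forces $v$ into $\hat{R}$ or $\hat{Z}$, and that the lucky events are pairwise disjoint once $\eta\geq 2$ (a choice compatible with the constraint on $\eta$ needed for Lemma~\ref{lemma:lucky}) --- which is a harmless, indeed welcome, tightening of the same argument.
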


\begin{proof}
In a round of phase $i_{out}$, a sufficient condition for $v$ to be inactive by the end of the round is that either $v$ or any node in $A_{t_1}^{W_1}(v, \omega_1)$ enters $\hat{R}$, such that $v$ either enters $\hat{R}$ or $\hat{Z}$ and exits the algorithm.
Further, that either $v$ or any node in $A_{t_1}^{W_1}(v, \omega_1)$ gets lucky satisfies this condition.
Therefore, the probability for $v$ to become inactive in the round is at least
\begin{align*}
& \sum_{v' \in A_{t_1}^{W_1}(v, \omega_1)} Prob(v' \text{ is lucky}) \\
\geq & \bigl|A_{t_1}^{W_1}(v, \omega_1)\bigr| 2^{-\left(\log b_{max} - i_{out} + 3 + (2\eta + 1)^2\right)} \\
\geq & 2^{-\left(3 + (2\eta + 1)^2\right)}.
\qedhere
\end{align*}
\end{proof}

\begin{lemma}
\label{lemma:degree}
Let $E_{i_{out}}$ denote the event that at the end of the phase $i_{out}$, $\bigl|A_{t_1}^{W_1}(u, \omega_1)\bigr| \leq b_{max} / 2^{i_{out}}$ for every active node $u \in W_1$, where $t_1$ is the time that the last round in $i_{out}$ ends.
$\forall i_{out}$, $Prob(E_{i_{out}}) \geq 1 - i_{out} / n^{C_7}$, for some positive constant $C_7$.
\end{lemma}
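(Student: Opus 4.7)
The plan is to proceed by induction on $i_{out}$, using Lemma~\ref{lemma:success} as the per-round shrinkage guarantee and a simple union bound over the $C_4 \log n$ rounds inside each phase followed by a union bound over nodes. Write $D_j = b_{max}/2^j$, so the claim is that after phase $i_{out}$ every active $u \in W_1$ satisfies $|A_{t_1}^{W_1}(u,\omega_1)| \leq D_{i_{out}}$ with probability $\geq 1 - i_{out}/n^{C_7}$.

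The base case $i_{out}=0$ is immediate from the definition of $b_{max}$. For the inductive step, condition on $E_{i_{out}-1}$. Then at the start of phase $i_{out}$, every active node $u \in W_1$ has $|A^{W_1}(u,\omega_1)| \leq D_{i_{out}-1} = 2^{\log b_{max} - i_{out} + 1}$. Since nodes can only transition from active to inactive (never the reverse), this upper bound is monotonically preserved throughout the phase, so the precondition of Lemma~\ref{lemma:success} is met at every round of phase $i_{out}$.

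Now fix any $v \in W_1$ and consider the bad event $B_v$ that $v$ is still active at the end of phase $i_{out}$ with $|A^{W_1}(v,\omega_1)| > D_{i_{out}}$. On $B_v$, monotonicity of the active set forces $|A^{W_1}(v,\omega_1)| \geq D_{i_{out}} = 2^{\log b_{max} - i_{out}}$ at the start of every one of the $C_4 \log n$ rounds, so Lemma~\ref{lemma:success} applies at $v$ in each such round and gives conditional probability at least $C_6$ that $v$ becomes inactive in that round, regardless of the history at the start of the round. Because the coin flips of Algorithm~\ref{alg:ex-ruling} are fresh in each round, the survival events across rounds compose multiplicatively, yielding $\Pr(B_v) \leq (1-C_6)^{C_4 \log n}$. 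Choosing $C_4$ large enough (concretely $C_4 \geq (C_7+1)/\log(1/(1-C_6))$) makes this at most $n^{-(C_7+1)}$; a union bound over the at most $n$ nodes in $W_1$ gives $\Pr\bigl(\bigcup_v B_v\bigr) \leq n^{-C_7}$. Hence $\Pr(E_{i_{out}} \mid E_{i_{out}-1}) \geq 1 - n^{-C_7}$, and combining with the inductive hypothesis yields $\Pr(E_{i_{out}}) \geq 1 - i_{out}/n^{C_7}$.

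The main obstacle is the conditioning argument in the last paragraph: one must justify that the per-round failure probability is at most $1 - C_6$ \emph{independently enough} across the $C_4 \log n$ rounds for the product bound to hold. The two ingredients that make this work are (i) the monotone shrinkage of active sets, which ensures both the upper bound required for Lemma~\ref{lemma:success} and the lower bound $|A^{W_1}(v,\omega_1)| \geq D_{i_{out}}$ keep holding on the bad event, so Lemma~\ref{lemma:success}'s hypothesis is satisfied in every round; and (ii) the fact that Algorithm~\ref{alg:ex-ruling} uses independent randomness per round, so Lemma~\ref{lemma:success}'s constant deactivation probability applies conditional on any history at the start of the round. Everything else is bookkeeping of constants.
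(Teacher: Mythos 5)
Your proposal is correct and follows essentially the same route as the paper's proof: induction on the phase, conditioning on $E_{i_{out}-1}$, applying Lemma~\ref{lemma:success} in each of the $C_4 \log n$ rounds with the same choice $C_4 \geq (C_7+1)/\log\bigl(1/(1-C_6)\bigr)$ to get a per-node failure probability of $n^{-(C_7+1)}$, and then a union bound over the at most $n$ nodes combined with the inductive hypothesis. Your explicit remarks on monotonicity of the active sets and on the fresh per-round randomness just make precise what the paper's proof leaves implicit.
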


\begin{proof}[proof by induction]
At the end of the phase $i_{out} = 0$, this is trivial.
Suppose that for $i_{out} = i \geq 1$ the statement is true; we show that it still holds for $i_{out} = i + 1$.
At the beginning of phase $i+1$, if we already have event $E_{i+1}$, we are done; otherwise, let $v$ denote an active node such that $\bigl|A_{t_2}^{W_1}(v, \omega_1)\bigr| > \frac{b_{max}}{2^{i+1}}$, where $t_2$ is the time that the first round in $i+1$ starts.
We show $Prob(E_{i+1}) \geq 1 - \frac{i+1}{n^{C_7}}$ as below:
\begin{asparaenum}[(1)]
\item We choose a constant $C_4 \geq \frac{C_7+1}{\log \frac{1}{1 - C_6}}$ for the inner loop at Line~\ref{algline:loop2} of Algorithm~\ref{alg:ex-ruling}.
\item Under the induction assumption for phase $i$,
we have:
for any round $i_{in}$ during phase $i+1$, with $t_3$ being the time that this round starts,
if $\bigl|A_{t_3}^{W_1}(v, \omega_1)\bigr| \leq \frac{b_{max}}{2^{i+1}}$, we are done;
otherwise, as long as $\frac{b_{max}}{2^{i}} \geq \bigl|A_{t_3}^{W_1}(v, \omega_1)\bigr| \geq \frac{b_{max}}{2^{i+1}}$, the probability for $v$ to turn inactive during the round is at least $C_6$ according to Lemma~\ref{lemma:success}.
Then, the probability for $v$ to become inactive by the end of phase $i+1$ (consisting of $C_4 \log n$ rounds) is at least $1 - (1 - C_6)^{C_4 \log n} \geq 1 - 1/n^{C_7+1}$.
\item By considering both the conditional probability and the fact that there are at most $n$ such nodes as $v$,
\begin{align*}
& Prob(E_{i+1}) \\
\geq & Prob(E_{i+1} \bigm| E_{i}) Prob(E_{i}) \\
\geq & (1 - \frac{n}{n^{C_7+1}})(1 - \frac{i}{n^{C_7}}) \geq 1 - \frac{i+1}{n^{C_7}}.
\qedhere
\end{align*}
\end{asparaenum}
\end{proof}

At the end of phase $i_{out} = \log b_{max} + 1$, with $t_4$ being the time that the last round in phase $i_{out}$ ends, we have that with probability of at least $1 - \frac{\log b_{max} + 1}{n^{C_7}}$, for every active node $u \in W_1$,
$|A_{t_4}^{W_1}(u, \omega_1)| \leq 1/2 < 1$.
That means that all the nodes in $W_1$ have joined either $\hat{R}$ or $\hat{Z}$, with probability of at least $1 - \frac{1}{n^{C_7-1}}$, concluding the proof of Lemma~\ref{lemma:runtime}.

\subsection{Proof of Lemma~\ref{lemma:covered2}}
\label{append:covered2}

Proof by contradiction.
Suppose there exists a node $v \in \hat{Z}$ that is not $\omega_2$-covered by $\hat{R}$.
Therefore, all the nodes in $\hat{R}$ are outside of the ball $B(v, \omega_2)$.
We calculate $SP(\hat{R},v)$ and derive the conflict.
We notice the following three facts:
\begin{inparaenum} [(1)]
\item all the nodes in $\hat{R}$ have a mutual distance of at least $\omega_1$;
\item the distance from any sender node in $\hat{R}$ to $v$ is $>\omega_2$; and
\item $\omega_2 > \omega_1/2 > 0$.
\end{inparaenum}
Due to Proposition~\ref{prop:sensed-interference},
\begin{align*}
SP(\hat{R},v)
\leq & 36 \frac{\alpha-1}{\alpha-2} \frac{P}{\omega_1^2 (\omega_2)^{\alpha-2}} + N \\
\leq & 36 \frac{\alpha-1}{\alpha-2} \frac{\omega_1^{\alpha-2}}{\omega_2^{\alpha-2}} Thres(\omega_1)
\leq  Thres(\omega_1).
\end{align*}

According to the condition for $v$ to enter $\hat{Z}$, we must have $SP(\hat{R},v) > Thres(\omega_1)$, where lies the contradiction.

\subsection{Proof of Lemma~\ref{lemma:spaced}}
\label{append:halfduplex}

Suppose there is ``bad'' node $v \in \hat{R}$ such that there exists a node $v' \in \hat{R}$ and $d(v,v') < \omega_1$.
We call such a $v'$ a ``bad'' partner of $v$.
The only possible situation for $v$ to have a ``bad'' partner $v'$ is that $v$ and $v'$ enter $\hat{R}$ in the same round; otherwise, one of them should have been ``pushed'' into $\hat{Z}$ during the decision step of a round when the other enters $\hat{R}$.

Let $u$ denote an arbitrary node in $\hat{R}$.
The necessary and sufficient condition for $u$ to be bad (or to have a bad partner) is that, in the coordination step of the round when $u$ enters $\hat{R}$, there exists at least one active node $u' \in W_1$ such that,
\begin{inparaenum} [(1)]
\item $d(u,u') < \omega_1$,
\item $u$ and $u'$ are in $U$, and
\item $u$ and $u'$ made the same random binary decisions all through the $C_5 \log m$ slots of the coordination step.
\end{inparaenum}

W.l.o.g., we assume $u$ enters $\hat{R}$ at round $i_{in}$ in phase $i_{out}$.
Let $t_1$ denote the time that round $i_{in}$ in phase $i_{out}$ begins.
The probability that $u$ is bad equals the probability for at least one other active node $u' \in A_{t_1}^{W_1}(u,\omega_1)$ (\ie, $d(v,v') < \omega_1$) to enter $\hat{R}$ at round $i_{in}$ in phase $i_{out}$.
\begin{align*}
& Prob(u \text{ is bad}) \\
\leq & \sum_{u' \in A_{t_1}^{W_1}(u,\omega_1)} Prob(U(u')=1) \frac{1}{2^{C_5 \log n}} \\
\leq & \frac{1}{n^{C_5}} \frac{\bigl|A_{t_1}^{W_1}(u,\omega_1)\bigr|}{2^{\log b_{max} - i_{out} + 2}} \\
\leq & \frac{1}{n^{C_5}} \frac{\max_{\text{active } w \in W_1} |A_{t_1}^{W_1}(w,\omega_1)|}{2^{\log b_{max} - i_{out} + 2}}.
\end{align*}

Further, due to Lemma~\ref{lemma:degree},
\begin{align*}
& Prob\big(\max_{\text{active } w \in W_1} \bigl|A_{t_1}^{W_1}(w,\omega_1)\bigr| \leq 2^{\log b_{max} - i_{out}+1}\big) \\
\geq & 1 - \frac{i_{out}-1}{n^{C_7}}
\geq 1 - \frac{1}{n^{C_7-1}}.
\end{align*}

Therefore,
\begin{align*}
Prob(u \text{ is good}) & \geq (1 - \frac{1}{n^{C_5}})(1 - \frac{1}{n^{C_7-1}}) \\
& \geq 1 - \frac{2}{n^{\min\{C_5, C_7-1\}}}.
\end{align*}

Finally, since $\hat{R}$ contains at most $n$ nodes,
the probability that there are no bad nodes in $\hat{R}$ is at least $1 - \frac{1}{n^{\min\{C_5, C_7-1\}-1}}$.

\end{document}